%
%
%
%
%
%
\documentclass{svjour3}                     
\smartqed  
\usepackage{graphicx}
\usepackage{bm}
\usepackage{amsmath}
\usepackage{amssymb}
\usepackage{latexsym}
\usepackage{epsfig}
\usepackage{amsbsy}
\usepackage{array}
\usepackage{amssymb}
\usepackage{setspace}
\usepackage{algorithm}
\usepackage{algorithmic}
\numberwithin{equation}{section}
\numberwithin{theorem}{section}
\numberwithin{lemma}{section}
\numberwithin{remark}{section}
\usepackage{slashbox}

\def\sint{\ifmmode{- \!\!\!\!\!\! \int}
    \else{\hbox{$- \!\!\!\! \int \ $}}\fi}

\usepackage{mathptmx}      
%
%
%


%
\begin{document}

\title{Numerical algorithms for the forward and backward fractional Feynman-Kac equations}


\author{Weihua Deng$^{*,1}$, Minghua Chen$^{1}$, Eli Barkai$^{2}$
}


\institute{
$^{*}$Corresponding author. E-mail: dengwh@lzu.edu.cn            \\
$^{1}$School of Mathematics and Statistics,
Lanzhou University, Lanzhou 730000, P. R. China \\
$^{2}$Department of Physics and Advanced Materials and Nanotechnology Institute, Bar-Ilan University, Ramat Gan 52900, Israel.
E-mail: Eli.Barkai@biu.ac.il}

\date{Received: date / Accepted: date}

\maketitle

\begin{abstract}
The Feynman-Kac equations are a type of partial differential equations describing the distribution of functionals of diffusive motion. The probability density function (PDF) of Brownian functionals satisfies the Feynman-Kac formula, being a Schr\"{o}dinger equation in imaginary time. The functionals of no-Brownian motion, or anomalous diffusion, follow the fractional Feynman-Kac equation [J. Stat. Phys. {\bf 141}, 1071-1092, 2010], where the fractional substantial derivative is involved. Based on recently developed discretized schemes for fractional substantial derivatives [arXiv:1310.3086], this paper focuses on providing algorithms for numerically solving the forward and backward fractional Feynman-Kac equations; since the fractional substantial derivative is non-local time-space coupled operator, new challenges are introduced comparing with the general fractional derivative. Two ways (finite difference and finite element) of discretizing the space derivative  are considered. For the backward fractional Feynman-Kac equation, the numerical stability and convergence of the algorithms with first order accuracy are theoretically discussed; and the optimal estimates are obtained. For all the provided schemes, including the first order and high order ones, of both forward and backward Feynman-Kac equations, extensive numerical experiments are performed to show their effectiveness.
\keywords{fractional Feynman-Kac equation \and fractional substantial derivative \and optimal convergent order \and
 numerical stability and  convergence \and numerical inversion of Laplace transforms}
\end{abstract}


\section{Introduction}
Letting $x(t)$ be a trajectory of a Brownian particle and $U(x)$ be a prescribed function, the Brownian functional can be defined as $A=\int_0^t U[x(\tau)]d\tau$ \cite{Majumdar:05}, which has many physical applications. In 1949, inspiring by Feynman's path integrals Kac derives a Schr\"{o}dinger-like equation for the distribution of the functionals of diffusive motion \cite{Kac:49}. With the rapid development on the study of non-Brownian motion, or anomalous diffusion \cite{Mandelbrot:68,Metzler:00}, the functionals of anomalous diffusion naturally attract the interests of physicists.  In particular, Carmi, Turgeman, and Barkai derive the forward and backward fractional Feynman-Kac equations for describing the distribution of the functionals of anomalous diffusion
\cite{Carmi:11,Carmi:10,Turgeman:09}, which involves the fractional substantial derivative \cite{Friedrich:06}. Being the same form of Brownian functional, the functional of anomalous diffusion can also be defined as
\begin{equation}\label{1.1}
  A=\int_0^t U[x(\tau)]d\tau,
\end{equation}
where $x(t)$ is a trajectory of non-Brownian particle; and there are a lot of different choice to prescribe $U(x)$. For example, we can take $U(x)=1$ in a given domain and to be zero otherwise, which characterizes the time spent by a particle in the domain; this functional can be used in kinetic studies of chemical reactions that take place exclusively in the domain \cite{Agmon:84,Carmi:10}. For inhomogeneous disorder dispersive systems, the motion of the particles is non-Brownian, and $U(x)$ is taken as $x$ or $x^2$ \cite{Carmi:10}.

In recent decades, the numerical methods for fractional partial differential equations (PDEs) are well developed, including finite difference methods \cite{Chen:12,Chen:09,Meerschaert:06,Sun:06,Zhou:13}, finite element \cite{Deng:08,Ervin:05,Jiang:11}, spectral method \cite{Li:12,Li:09}, etc.  However, it seems that there are no published works for numerically solving fractional PDEs with fractional substantial derivative. Fractional substantial derivative is a non-local time-space coupled operator; discretizing it and numerically solving the corresponding equations undoubtedly introduce some new difficulties comparing with the fractional derivative.  We detailedly discuss the properties and effectively numerical discretizations of the fractional substantial derivatives in \cite{Chen:13}. This paper focuses on numerically solving the forward and backward fractional Feynman-Kac equations with the fractional substantial derivative being discretized by the ways given in \cite{Chen:13} and the classical spatial derivative is treated by finite difference and finite element method, respectively. For the backward Feynman-Kac equation, we theoretically prove the numerical stability and convergence of its first order scheme. For all the proposed schemes, including the first order and high order ones, of both forward and backward fractional Feynman-Kac equations, the extensive numerical experiments are performed to show their effectiveness.

The definitions of fractional substantial calculus are given as follows \cite{Chen:13}.
\begin{definition}\label{definition1.1}
Let $ \nu>0$, $\rho$ be a  constant, and $P(t)$ be piecewise continuous on $(0,\infty)$ and integrable on any finite subinterval  $[0,\infty)$. Then
the fractional substantial  integral  of $P(t)$ of order $\nu$ is defined as
\begin{equation*}
{^s\!}I_t^\nu P(t)=\frac{1}{\Gamma(\nu)}\int_{0}^t{\left(t-\tau\right)^{\nu-1}}e^{-\rho U(x)(t-\tau)}{P(\tau)}d\tau, ~~~~t>0,
\end{equation*}
where $U(x)$ is a prescribed  function in (\ref{1.1}).
\end{definition}
\begin{definition}\label{definition1.2}
Let $ \mu>0$, $\rho$ be a  constant, and $P(t)$ be (m-1)-times continuously differentiable on $(a,\infty)$ and its $m$-times derivative be integrable on any finite subinterval of  $[a,\infty)$, where $m$ is the smallest integer that exceeds $\mu$. Then
the fractional substantial  derivative  of $P(t)$ of order $\mu$ is defined as
\begin{equation*}
{^s\!}D_t^\mu P(t)={^s\!}D_t^m[{^s\!}I_t^{m-\mu} P(t)],
\end{equation*}
where
\begin{equation*}
 {^s\!} D_t^m=\left(\frac{\partial}{\partial t}+\rho  U(x)\right)^m.
\end{equation*}
\end{definition}
The forward and backward fractional Feynman-Kac equation derived in \cite{Carmi:11,Carmi:10,Turgeman:09} are
\begin{equation}\label{1.2}
\frac{\partial}{\partial t}P(x,\rho,t)=\kappa_\alpha \, \frac{\partial^2}{\partial x^2} {^s\!}D_t^{1-\alpha} P(x,\rho,t)-\rho U(x)P(x,\rho,t), \end{equation}
and
\begin{equation}\label{1.3}
\frac{\partial}{\partial t}P(x,\rho,t)=\kappa_\alpha \, {^s\!}D_t^{1-\alpha} \frac{\partial^2}{\partial x^2}P(x,\rho,t)-\rho U(x)P(x,\rho,t),
\end{equation}
where $P(x,\rho ,t):=\int_0^\infty P(x,A,t)e^{-\rho A}dA$; for (\ref{1.2}), $P(x,A,t)$ denotes the joint probability density function (PDF) of finding the particle on $(x,A)$ at time $t$; while for (\ref{1.3}), $P(x,A,t)$ is the joint PDF of finding the particle on $A$ at time $t$ with the initial position of the particle at $x$;  the functional $A$  is defined as (\ref{1.1}); the diffusion coefficient $\kappa_\alpha $ is a positive constant and $\alpha \in (0,1)$; when $U(x)=0$,  both (\ref{1.2}) and (\ref{1.3}) reduces to the celebrated fractional Fokker-Planck equation \cite{Barkai:01,Metzler:00}. In fact, from the definition of fractional substantial derivative, Eq. (\ref{1.3}) can be rewritten as
\begin{equation}\label{1.4}
{^s\!}D_t P(x,\rho,t)= {^s\!}D_t^{1-\alpha} \left[ \kappa_\alpha \frac{\partial^2}{\partial x^2}P(x,\rho,t)\right];
\end{equation}
then we can further get its equivalent form (see the Appendix)
\begin{equation}\label{1.5}
\begin{split}
{^s_c}{D}_t^\alpha P(x,t)
&={^s\!}D_t^\alpha [P(x,t)-e^{-\rho U(x) t}P(x,0)]\\
&={^s\!}D_t^\alpha P(x,t)-\frac{t^{-\alpha}e^{-\rho U(x) t}}{\Gamma(1-\alpha)}P(x,0)
  = \kappa_\alpha \frac{\partial^2}{\partial x^2} P(x,t),
\end{split}
\end{equation}
here and in the following $P(x,\rho,t)$ is replaced by $P(x,t)$ since $\rho$ is taken as a fixed constant. For (\ref{1.2}), from the definition of fractional substantial derivative, we can also recast it as
\begin{equation}\label{1.6}
 {^s\!}D_tP(x,t)=\kappa_\alpha \frac{\partial^2}{\partial x^2}\,{^s\!}D_t^{1-\alpha}P(x,t);
\end{equation}
but it should be noted that the two operators $\frac{\partial^2}{\partial x^2}$ and $D_t^{1-\alpha}$ do not commute. 

The outline of this paper is as follows. In Section 2, for (\ref{1.2}) and (\ref{1.3}) we derive the numerical schemes with finite difference method to discretize the space derivative; and theoretically prove that the first order time discretization scheme is unconditionally stable and convergent for (\ref{1.3}). In Section 3, for (\ref{1.3}) the time semi-discretized and full discretized schemes of finite element method are provided; stability and convergence of the schemes are rigourously established; moreover, the optimal convergent rate is obtained. To confirm the theoretical results and show the effectiveness of the first order and high order schemes, the extensive numerical results are provided in Section 4. We conclude the paper with some remarks in the last section.

\section{Finite difference for fractional Feynman-Kac equation}
In this section we focuses on deriving the difference schemes for the backward fractional Feynman-Kac equation (\ref{1.4})  and theoretically prove that the provided first order time discretization scheme of (\ref{1.4}) is unconditionally stable and convergent; the difference schemes for the forward fractional Feynman-Kac equation (\ref{1.2}) are given as a remark. 

Letting $T>0$, $\Omega=(0,l)$, rewriting (\ref{1.5}) and making it subject to the given initial and boundary conditions, we have
\begin{equation}\label{2.1}
 {^s_c}{D}_t^\alpha P(x,t) ={^s\!}D_t^\alpha [P(x,t)-e^{-\rho U(x) t}P(x,0)]= \kappa_\alpha \frac{\partial^2}{\partial x^2} P(x,t),
  ~~0<t \leq T,~~x \in \Omega,
\end{equation}
with  initial and boundary conditions
\begin{equation}\label{2.2}
\begin{split}
&P(x,0)=\phi(x),~~x\in \Omega,\\
&P(0,t)=\psi_1(t),~~P(l,t)=\psi_2(t), ~~0<t \leq T.
\end{split}
\end{equation}

\subsection{Derivation of the difference scheme}
Let the mesh points $x_i=ih$ for $i=0,1,\ldots,M$, and $t_n=n\tau$,
$n=0,1,\ldots,N$,   where $h=l/M$ and $\tau=T/N$ are the uniform space stepsize and time steplength, respectively.
Denote $P_i^n$ as the numerical approximation to $P(x_i,t_n)$.
To approximate  (\ref{2.1}), we utilize the second order central difference formula for the spatial derivative; that is
\begin{equation*}
\frac{\partial^2 P(x,t)}{\partial x^2}\Big|_{(x_i,t_n)}=\frac{P(x_{i+1},t_n)  -2P(x_i,t_n)+P(x_{i-1},t_n)}{h^2}+\mathcal{O}\left(h^2 \right).
\end{equation*}
From (3.8) of \cite{Chen:13}, we know that the fractional substantial derivative has $q$-th order approximations, i.e.,
\begin{equation}\label{2.3}
{^s\!}D_t^\alpha P(x,t)|_{(x_i,t_n)}=\tau^{-\alpha}\sum_{k=0}^{n}{d}_{i,k}^{q,\alpha}P(x_i,t_{n-k})+\mathcal{O}(\tau^q), ~~q=1,2,3,4,5,
\end{equation}
with
$${d}_{i,k}^{q,\alpha}=e^{-\rho U_i k \tau}{l}_k^{q,\alpha},~~U_i=U(x_i),~~q=1,2,3,4,5,$$
where ${l}_k^{1,\alpha}$, ${l}_k^{2,\alpha}$,    ${l}_k^{3,\alpha}$, ${l}_k^{4,\alpha}$ and  ${l}_k^{5,\alpha}$
are defined by (2.2),  (2.4), (2.6), (2.8) and (2.10) in \cite{Chen:1313}, respectively. In the following, we do the detailed theoretical analysis for  the first order time  discretization scheme of (\ref{2.1}). For the simplification, we denote $d_{i,k}^{1,\alpha}$ by $d_{i,k}^\alpha$; then
\begin{equation}\label{2.4}
\begin{split}
&{^s\!}D_t^\alpha P(x,t)|_{(x_i,t_n)}=\tau^{-\alpha}\sum_{k=0}^{n}{d_{i,k}^\alpha}P(x_i,t_{n-k})+\mathcal{O}(\tau); \\
& {^s\!}D_t^\alpha [e^{-\rho U(x) t}P(x,0)]_{(x_i,t_n)}=\tau^{-\alpha}\sum_{k=0}^{n}{d_{i,k}^\alpha}e^{-\rho U_i (n-k)\tau}P(x_i,0)+\mathcal{O}(\tau),
\end{split}
\end{equation}
where the coefficients
\begin{equation}\label{2.5}
  {d_{i,k}^\alpha}=e^{-\rho U_i k \tau}g_k,~~ g_k=(-1)^k\left ( \begin{matrix} \alpha \\ k\end{matrix} \right ),
\end{equation}
with
\begin{equation*}
  g_0=1, ~~~~g_k=\left(1-\frac{\alpha+1}{k}\right)g_{k-1},~~k \geq 1.
\end{equation*}

Then Eq.  (\ref{2.1}) can be rewritten as
\begin{equation}\label{2.6}
\begin{split}
&\tau^{-\alpha}\sum_{k=0}^{n}{d_{i,k}^\alpha}P(x_i,t_{n-k})-\tau^{-\alpha}\sum_{k=0}^{n}{d_{i,k}^\alpha}e^{-\rho U_i (n-k)\tau}P(x_i,0)\\
&\quad=\kappa_\alpha \frac{P(x_{i+1},t_n)  -2P(x_i,t_n)+P(x_{i-1},t_n)}{h^2}+r_i^n,
\end{split}
\end{equation}
with
\begin{equation}\label{2.7}
  |r_i^n| \leq C_P(\tau+h^2),
\end{equation}
where $C_P$ is a constant depending only on $P$.

Multiplying (\ref{2.6}) by $\tau^\alpha$, we have the following equation
\begin{equation}\label{2.8}
\begin{split}
&\sum_{k=0}^{n-1}{d_{i,k}^\alpha}P(x_i,t_{n-k})-\sum_{k=0}^{n-1}{d_{i,k}^\alpha}e^{-\rho U_i (n-k)\tau}P(x_i,0)\\
&\quad=\kappa_\alpha\tau^{\alpha}\frac{P(x_{i+1},t_n)  -2P(x_i,t_n)+P(x_{i-1},t_n)}{h^2}+R_i^n,
\end{split}
\end{equation}
with
\begin{equation}\label{2.9}
|R_i^n|=|\tau^{\alpha}r_i^n| \leq C_P \tau^{\alpha}(\tau+h^2).
\end{equation}
From (\ref{2.5}) and (\ref{2.8}),  the resulting discretization of (\ref{2.1}) can be rewritten as
\begin{equation}\label{2.10}
 P_i^n-\frac{\kappa_\alpha \tau^{\alpha}}{h^2} \left(P_{i+1}^n-2P_{i}^n+P_{i-1}^n\right)
=\sum_{k=0}^{n-1}{d_{i,k}^\alpha}e^{-\rho U_i  (n-k)\tau}P_i^{0}-\sum_{k=1}^{n-1}{d_{i,k}^\alpha}P_i^{n-k},~~n\geq 1;
\end{equation}
or
\begin{equation}\label{2.11}
 P_i^n-\frac{\kappa_\alpha \tau^{\alpha}}{h^2} \left(P_{i+1}^n-2P_{i}^n+P_{i-1}^n\right)
=\sum_{k=0}^{n-1}e^{-\rho U_i  n\tau}  g_kP_i^{0}-\sum_{k=1}^{n-1}e^{-\rho U_i k \tau}g_kP_i^{n-k},~~n\geq 1,
\end{equation}
with  $i=1,2,\ldots,M-1$. It is worthwhile to noting that the second term
on the right hand side of (\ref{2.10}) or (\ref{2.11}), respectively,  automatically vanishes when $n =1$.

\begin{remark}\label{remark2.1}
If we utilize the $q$-th order approximation of (\ref{2.3})  to  discretize  the time fractional substantial derivative of (\ref{2.1}),
the resulting discretization of (\ref{2.1}) is
\begin{equation}\label{2.12}
{d}_{i,0}^{q,\alpha} P_i^n-\frac{\kappa_\alpha \tau^{\alpha}}{h^2} \left(P_{i+1}^n-2P_{i}^n+P_{i-1}^n\right)
=\sum_{k=0}^{n-1}{d}_{i,k}^{q,\alpha}e^{-\rho U_i  (n-k)\tau}P_i^{0}-\sum_{k=1}^{n-1}{d}_{i,k}^{q,\alpha}P_i^{n-k},
\end{equation}
which gives a local truncation error of $\mathcal{O}\left(\tau^q+h^2 \right), q=2,3,4,5.$
\end{remark}

\begin{remark}\label{remark2.2}



Using the second order central difference formula for the spatial derivative leads to
\begin{equation*}
\begin{split}
&\frac{\partial^2}{\partial x^2}{^s\!}D_t^{1-\alpha}P(x_i,t_n)\\
&\quad =\frac{{^s\!}D_t^{1-\alpha}P(x_{i+1},t_n) -2\,{^s\!}D_t^{1-\alpha} P(x_{i},t_n)+{^s\!}D_t^{1-\alpha}P(x_{i-1},t_n)}{h^2}+\mathcal{O}\left(h^2 \right).
\end{split}
\end{equation*}
Further applying (\ref{2.3}) to  approximate the time fractional substantial
 derivatives, then we get the discretization schemes of  (\ref{1.6}):
 \begin{equation}\label{2.14}
 \begin{split}
 &d_{i,0}^{q,1}P_i^n  -\frac{\kappa_\alpha \tau^{\alpha}}{h^2} (d_{i+1,0}^{q,1-\alpha}P_{i+1}^n -2d_{i,0}^{q,1-\alpha}P_i^n+ d_{i-1,0}^{q,1-\alpha}P_{i-1}^n)\\
&\quad  =-\sum_{k=1}^{n}d_{i,k}^{q,1}P_i^{n-k}+ \frac{\kappa_\alpha \tau^{\alpha}}{h^2}
 \sum_{k=1}^{n}{ (d_{i+1,k}^{q,1-\alpha}}P_{i+1}^{n-k} -2d_{i,k}^{q,1-\alpha}P_i^{n-k}+ d_{i-1,k}^{q,1-\alpha}P_{i-1}^{n-k}),
 \end{split}
\end{equation}
with the local truncation error $\mathcal{O}\left(\tau^q+h^2 \right),\,q=1,2,3,4,5.$
\end{remark}
\subsection{Stability and convergence}
In this subsection, we prove that the scheme (\ref{2.10}) is unconditionally stable and convergent in discrete $L^2$ norm and $L^\infty$ norm under the assumption that $0 \leq \rho U_i \leq \eta$. First, we introduce some relevant notations and properties of discretized inner product given in \cite{Hu:99}. Denote  $u^n=\{u_i^n| 0 \leq i \leq M, n \geq 0 \}$
and $v^n=\{v_i^n| 0 \leq i \leq M, n \geq 0 \}$, which are grid functions. And
\begin{equation*}
\begin{split}
&(u_i^n)_x=(u_{i+1}^n - u_{i}^n)/h, ~~~~(u_i^n)_{\overline{x}}=(u_{i}^n - u_{i-1}^n)/h; \\
&(u^n,v^n)=\sum_{i=1}^{M-1}u_i^nv_i^nh, ~~~~~||u^n||=(u^n,u^n)^{1/2}; \\
&(u^n,v^n]=\sum_{i=1}^{M}u_i^nv_i^nh, ~~~~~~~||u^n]|=(u^n,u^n]^{1/2}. \\
\end{split}
\end{equation*}
In particular, if $u_0^n=0$ and $u_M^n=0$, there exists
\begin{equation}\label{2.15}
\begin{split}
(u^n,(v^n_{\overline{x}})_x)=-(u_{\overline{x}}^n,v_{\overline{x}}^n] ~~{\rm and}~~||u^n||^2 \leq \frac{l^2}{8}||u^n_{\overline{x}}]|^2,
\end{split}
\end{equation}
where $l$ means the one appeared in $\Omega=(0,l).$

\begin{lemma}\label{lemma2.1}
The coefficients $g_k$ defined in (\ref{2.5}) satisfy
\begin{equation}\label{2.16}
\begin{split}
 & g_0=1; ~~~~ g_k<0,~~ (k\geq 1); ~~~~\sum_{k=0}^{n-1}g_k>0;  ~~~~\sum_{k=0}^{\infty}g_k=0;
  \end{split}
\end{equation}
and
\begin{equation}\label{2.17}
\begin{split}
& \frac{1}{n^\alpha \Gamma(1-\alpha)}< \sum_{k=0}^{n-1}g_k=-\sum_{k=n}^{\infty}g_k \leq \frac{1}{n^\alpha}, ~~{\rm for}~~n\geq 1.
  \end{split}
\end{equation}
\end{lemma}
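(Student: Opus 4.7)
The plan is to reduce everything to a single closed-form identity for the partial sum of the $g_k$ and then read off each assertion via elementary inequalities.

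From the binomial series $(1-z)^\alpha = \sum_{k=0}^\infty (-1)^k \binom{\alpha}{k} z^k = \sum_{k=0}^\infty g_k z^k$, the value $g_0 = 1$ is immediate. For $k \geq 1$ and $\alpha \in (0,1)$, writing
$$g_k = (-1)^k \frac{\alpha(\alpha-1)(\alpha-2)\cdots(\alpha-k+1)}{k!},$$
exactly $k-1$ of the numerator factors are negative (namely $\alpha-1,\ldots,\alpha-k+1$), so the numerator carries sign $(-1)^{k-1}$ and $g_k < 0$.

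The key step is the closed-form identity
$$S_n := \sum_{k=0}^{n-1} g_k = (-1)^{n-1}\binom{\alpha-1}{n-1} = \frac{\Gamma(n-\alpha)}{\Gamma(1-\alpha)\,\Gamma(n)},$$
which I would obtain by multiplying $(1-z)^\alpha \cdot (1-z)^{-1} = (1-z)^{\alpha-1}$, forming the Cauchy product on the left, and reading off the coefficient of $z^{n-1}$ on the right. This single formula settles most of the lemma: Stirling gives $S_n \to 0$, so $\sum_{k=0}^\infty g_k = 0$ and the equality $S_n = -\sum_{k=n}^\infty g_k$ inside (\ref{2.17}) follow; positivity $S_n > 0$ is manifest from the positivity of the gamma values and is consistent with the already established $g_k < 0$ for $k\geq 1$.

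For the sharp two-sided bound in (\ref{2.17}), set $a_n := n^\alpha \Gamma(n-\alpha)/\Gamma(n)$; the desired inequalities are equivalent to $1 < a_n \leq \Gamma(1-\alpha)$. I would compute
$$\frac{a_{n+1}}{a_n} = \Bigl(1+\tfrac{1}{n}\Bigr)^\alpha\Bigl(1-\tfrac{\alpha}{n}\Bigr) \leq \Bigl(1+\tfrac{\alpha}{n}\Bigr)\Bigl(1-\tfrac{\alpha}{n}\Bigr) = 1 - \tfrac{\alpha^2}{n^2} < 1,$$
the middle step being Bernoulli's inequality $(1+x)^\alpha \leq 1 + \alpha x$ valid for $\alpha \in (0,1)$ and $x \geq -1$. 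Hence $\{a_n\}$ is strictly decreasing; since $a_1 = \Gamma(1-\alpha)$ and $a_n \to 1$ by Stirling, both bounds drop out. The main obstacle is really only bookkeeping; the most delicate point is the strict lower bound, which requires both the strict monotonicity of $a_n$ (using $\alpha > 0$, which enters through the $\alpha^2/n^2$ gap) and approach to the limit $1$ from above, so that $a_n > 1$ for every finite $n$ rather than merely $a_n \geq 1$.
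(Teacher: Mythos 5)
Your proof is correct, and it is genuinely more self-contained than the paper's. The paper cites Podlubny for the sign and summability facts in (\ref{2.16}) and cites an external reference for both the monotonicity of $v_n=n^\alpha\sum_{k=0}^{n-1}g_k$ and the strict lower bound in (\ref{2.17}), then proves only the upper bound $\sum_{k=0}^{n-1}g_k\le n^{-\alpha}$ by induction on that borrowed monotonicity. You instead establish the closed form
$$\sum_{k=0}^{n-1}g_k=\frac{\Gamma(n-\alpha)}{\Gamma(1-\alpha)\,\Gamma(n)}$$
via the Cauchy product $(1-z)^\alpha(1-z)^{-1}=(1-z)^{\alpha-1}$, which makes every claim explicit: positivity of the partial sums is immediate, $\sum_{k=0}^\infty g_k=0$ follows from Stirling, and both bounds in (\ref{2.17}) reduce to $1<a_n\le\Gamma(1-\alpha)$ for $a_n=n^\alpha\Gamma(n-\alpha)/\Gamma(n)$, which your ratio computation with Bernoulli's inequality settles cleanly (note your $a_n$ is exactly $\Gamma(1-\alpha)v_n$, so you are in effect supplying the proof of the monotonicity (\ref{2.18}) that the paper only cites). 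The one place to be slightly careful is the deduction $\sum_{k=0}^{n-1}g_k=-\sum_{k=n}^\infty g_k$: it needs the tail series to converge, which is immediate here since the partial sums converge (to $0$), but is worth a half-sentence. Your argument buys an exact identity and strictness of both the monotonicity and the lower bound without appeal to outside lemmas, at the cost of invoking the gamma-ratio asymptotics; the paper's route is shorter on the page only because the hard steps are outsourced.
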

\begin{proof}
From \cite[p.\,208]{Podlubny:99}, it is easy to get (\ref{2.16}). Next we prove (\ref{2.17}).
Denoting $v_n=-n^\alpha \sum_{k=n}^\infty g_k=n^\alpha \sum_{k=0}^{n-1} g_k, \,n\geq 1 $, according to  \cite{Chen:09}, there exists
\begin{equation}\label{2.18}
 v_{n+1}< v_n, ~~{\rm i.e.,}~~ \sum_{k=0}^n g_k<\frac{n^\alpha}{(n+1)^\alpha} \sum_{k=0}^{n-1} g_k,~~{\rm for}~~n\geq 1,
\end{equation}
and
$$\frac{1}{n^\alpha \Gamma(1-\alpha)}< \sum_{k=0}^{n-1}g_k=-\sum_{k=n}^{\infty}g_k,~~{\rm for}~~n\geq 1.$$
Next we prove the following inequality by mathematical induction
\begin{equation}\label{2.19}
  \sum_{k=0}^{n-1}g_k=-\sum_{k=n}^{\infty}g_k \leq \frac{1}{n^\alpha},~~{\rm for}~~n\geq 1.
\end{equation}
It is obvious that (\ref{2.19}) holds when $n=1$ or $n=2$. Supposing that
\begin{equation*}
  \sum_{i=0}^{s-1}g_i=-\sum_{i=s}^{\infty}g_i \leq \frac{1}{s^\alpha}, ~~s=1,2,\ldots, n-1,
\end{equation*}
and using (\ref{2.18}), we obtain
$$ \sum_{k=0}^{n-1}g_k<\frac{(n-1)^\alpha}{n^\alpha} \sum_{k=0}^{n-2} g_k\leq \frac{(n-1)^\alpha}{n^\alpha} \frac{1}{(n-1)^\alpha}=\frac{1}{n^\alpha},
~~{\rm for}~~n\geq 2. $$
Then the desired inequality (\ref{2.17}) holds.
\end{proof}

\begin{theorem}\label{theorem2.1}
When $0 \leq \rho U_i \leq \eta$, the difference scheme (\ref{2.11})  is unconditionally stable.
\end{theorem}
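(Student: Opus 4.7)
The plan is to establish unconditional $L^2$ stability via a discrete energy argument combined with an induction on the time index. Set $\varepsilon_i^n = P_i^n - \tilde P_i^n$, where $\tilde P^n$ solves (\ref{2.11}) with a perturbed initial datum $\tilde\phi$; since the boundary conditions are identical, $\varepsilon_0^n = \varepsilon_M^n = 0$ and $\varepsilon^n$ satisfies the homogeneous version of (\ref{2.11}). The goal is to show $||\varepsilon^n|| \leq ||\varepsilon^0||$ uniformly in $n$, $\tau$ and $h$.

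First, I would take the discrete inner product of the error equation with $\varepsilon^n$ over the interior grid points. Applying the summation-by-parts identity in (\ref{2.15}) to the diffusion term produces
\[
||\varepsilon^n||^2 + \kappa_\alpha \tau^\alpha\, ||\varepsilon^n_{\overline{x}}]|^2 = (\varepsilon^n, R^n),
\]
where $R^n$ denotes the right-hand side of (\ref{2.11}) with $P$ replaced by $\varepsilon$. Discarding the nonnegative gradient term and invoking the Cauchy--Schwarz inequality yields the clean bound $||\varepsilon^n|| \leq ||R^n||$.

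The heart of the argument is the estimate of $||R^n||$. Under the assumption $\rho U_i \geq 0$, every factor $e^{-\rho U_i k \tau}$ lies in $(0,1]$, and Lemma \ref{lemma2.1} supplies $g_k \leq 0$ for $k \geq 1$ together with $0 < \sum_{k=0}^{n-1} g_k \leq 1$. A componentwise triangle inequality then gives
\[
|R_i^n| \leq \Big(\sum_{k=0}^{n-1} g_k\Big) |\varepsilon_i^0| + \sum_{k=1}^{n-1}(-g_k)\, |\varepsilon_i^{n-k}|,
\]
in which the nonnegative weights sum to $g_0 = 1$. Applying the Cauchy--Schwarz inequality to this convex combination, squaring, and summing in $i$ with weight $h$ yields
\[
||R^n||^2 \leq \Big(\sum_{k=0}^{n-1} g_k\Big) ||\varepsilon^0||^2 + \sum_{k=1}^{n-1}(-g_k)\, ||\varepsilon^{n-k}||^2.
\]

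To close the argument I would induct on $n$. The case $n=0$ is trivial. If $||\varepsilon^m|| \leq ||\varepsilon^0||$ holds for all $m < n$, the previous bound gives $||R^n||^2 \leq ||\varepsilon^0||^2$ because the weights form a partition of unity, hence $||\varepsilon^n|| \leq ||\varepsilon^0||$ and stability follows unconditionally in $\tau$ and $h$. The $L^\infty$ version can be obtained along the same lines by evaluating the scheme at an interior maximum of $|\varepsilon^n|$ and exploiting the diagonal dominance of the implicit diffusion operator together with the same convex-combination estimate for $R^n$. The crux of the proof, and the main obstacle, lies in recognizing the partition-of-unity structure of the coefficients and in using $\rho U_i \geq 0$ to keep every exponential factor bounded by $1$; without this sign assumption the history terms could amplify and the induction would break.
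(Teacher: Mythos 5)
Your proposal is correct and follows essentially the same route as the paper: an energy estimate with the summation-by-parts identity (\ref{2.15}) to discard the nonnegative diffusion term, the sign properties of the weights $g_k$ from Lemma \ref{lemma2.1} together with $0\leq\rho U_i$ to bound every exponential factor by $1$, and an induction on $n$ exploiting that the resulting coefficients sum to $g_0=1$. The only (cosmetic) difference is that you bound the right-hand side via Cauchy--Schwarz followed by Jensen's inequality on the convex combination, whereas the paper applies $ab\leq\tfrac{1}{2}(a^2+b^2)$ term by term and rearranges; both yield the same recursive inequality and the same conclusion $\|\varepsilon^n\|\leq\|\varepsilon^0\|$.
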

\begin{proof}
Let $\widetilde{P_i}^n$ be the approximate solution of $P_i^n$,
which is the exact solution of the scheme (\ref{2.11}). Taking $e_i^n=\widetilde{P_i}^n- P_i^n$, $i=1,2,\ldots,M-1$, then from (\ref{2.11})
we get the following perturbation equation
\begin{equation}\label{2.20}
e_i^n-\tau^{\alpha}\kappa_\alpha\frac{e_{i+1}^n-2e_{i}^n+e_{i-1}^n}{h^2} =
\sum_{k=0}^{n-1}e^{-\rho U_i  n\tau}  g_ke_i^{0}-\sum_{k=1}^{n-1}e^{-\rho U_i k \tau}g_ke_i^{n-k},
\end{equation}
with $e_0^n=e_M^n=0$. Multiplying (\ref{2.20}) by $he_i^n$ and summing up for $i$ from $1$ to $M-1$, then
\begin{equation}\label{2.21}
\begin{split}
&h\sum_{i=1}^{M-1}(e_i^n)^2-\tau^{\alpha}\kappa_\alpha h \sum_{i=1}^{M-1}\frac{e_{i+1}^n-2e_{i}^n+e_{i-1}^n}{h^2}e_i^n \\
&\quad=h\sum_{i=1}^{M-1}\sum_{k=0}^{n-1}e^{-\rho U_i  n\tau}  g_ke_i^{0}e_i^n-h\sum_{i=1}^{M-1}\sum_{k=1}^{n-1}e^{-\rho U_i k \tau}g_ke_i^{n-k}e_i^n.
\end{split}
\end{equation}
Next we estimate  (\ref{2.21}).
Firstly, we have
\begin{equation}\label{2.22}
h\sum_{i=1}^{M-1}(e_i^n)^2=||e^n||^2,
\end{equation}
and from (\ref{2.15}), it leads to
\begin{equation}\label{2.23}
\begin{split}
&-\tau^{\alpha}\kappa_\alpha h \sum_{i=1}^{M-1}\frac{e_{i+1}^n-2e_{i}^n+e_{i-1}^n}{h^2}e_i^n\\
&\qquad =-\tau^{\alpha}\kappa_\alpha h \sum_{i=1}^{M-1}((e{_i^n})_{\overline{x}})_xe_i^n
=-\tau^{\alpha}\kappa_\alpha   (e^n,(e^n_{\overline{x}})_x]\\
&\qquad =\tau^{\alpha}\kappa_\alpha   (e^n_{\overline{x}},e^n_{\overline{x}} ]
=\tau^{\alpha}\kappa_\alpha   ||e^n_{\overline{x}}]|^2
\geq \frac{8\tau^{\alpha}\kappa_\alpha }{l^2} ||e^n||^2 \geq 0.
\end{split}
\end{equation}
Since $e^{-\rho U_i n\tau} \in [e^{-\eta  T},1]$ and   from (\ref{2.5}) and (\ref{2.16}),  we obtain
\begin{equation}\label{2.24}
 g_k \leq d_{i,k}^\alpha= e^{-\rho U_i k \tau}g_k \leq e^{-\eta  T} g_k<0,~~  k\geq 1,
\end{equation}
and
\begin{equation}\label{2.25}
0< e^{-\rho U_i  n\tau}  \sum_{k=0}^{n-1} g_k=\sum_{k=0}^{n-1}e^{-\rho U_i  n\tau}  g_k  \leq  \sum_{k=0}^{n-1}g_k, ~~n\geq 1.
\end{equation}
Therefore, according to (\ref{2.25}) and  (\ref{2.24}), we obtain
\begin{equation}\label{2.26}
\begin{split}
h\sum_{i=1}^{M-1}\sum_{k=0}^{n-1}e^{-\rho U_i  n\tau}  g_ke_i^{0}e_i^n
&\leq   h\sum_{i=1}^{M-1}\sum_{k=0}^{n-1}e^{-\rho U_i  n\tau} g_k\frac{(e_i^{0})^2 +(e_i^{n})^2}{2}\\
&\leq   \frac{1}{2}\sum_{k=0}^{n-1} g_k  \left(||e^0||^2+||e^n||^2 \right),
\end{split}
\end{equation}
and
\begin{equation}\label{2.27}
\begin{split}
-h\sum_{i=1}^{M-1}\sum_{k=1}^{n-1}e^{-\rho U_i k \tau}g_ke_i^{n-k}e_i^n
& \leq -h\sum_{i=1}^{M-1}\sum_{k=1}^{n-1}e^{-\rho U_i k \tau}g_k \frac{(e_i^{n-k})^2 +(e_i^{n})^2}{2}\\
&\leq  - \frac{1}{2}\sum_{k=1}^{n-1} g_k  \left(||e^{n-k}||^2+||e^n||^2 \right).
\end{split}
\end{equation}
From (\ref{2.21}-\ref{2.27}), there exists
\begin{equation}\label{2.28}
\begin{split}
||e^n||^2  &\leq \frac{1}{2}\sum_{k=0}^{n-1} g_k  \left(||e^0||^2+||e^n||^2 \right)- \frac{1}{2}\sum_{k=1}^{n-1} g_k  \left(||e^{n-k}||^2+||e^n||^2 \right)\\
&=\left(1+ \frac{1}{2}\sum_{k=1}^{n-1} g_k\right)  ||e^0||^2- \frac{1}{2}\sum_{k=1}^{n-1} g_k ||e^{n-k}||^2.
\end{split}
\end{equation}
Next we prove that $||e^n||^2 \leq ||e^0||^2$ by mathematical induction. For $n=1$, (\ref{2.28}) holds obviously. Supposing
\begin{equation*}
||e^s||^2 \leq ||e^0||^2, ~~{\rm for}~~s=1,2,\ldots,n-1,
\end{equation*}
and using (\ref{2.28}), then we get
\begin{equation*}
\begin{split}
||e^n||^2
&\leq\left(1+ \frac{1}{2}\sum_{k=1}^{n-1} g_k\right)  ||e^0||^2- \frac{1}{2}\sum_{k=1}^{n-1} g_k ||e^{n-k}||^2\\
&\leq\left(1+ \frac{1}{2}\sum_{k=1}^{n-1} g_k\right)  ||e^0||^2- \frac{1}{2}\sum_{k=1}^{n-1} g_k ||e^{0}||^2
=||e^0||^2.
\end{split}
\end{equation*}
Hence, the proof is complete.
\end{proof}

\begin{lemma}\label{lemma2.2}
Let $R \geq 0$; $\varepsilon^k \geq 0$,~$k=0,1,\ldots,N$ and satisfy
\begin{equation}\label{2.29}
  \varepsilon^n\leq  - \sum_{k=1}^{n-1} g_k  \varepsilon^{n-k}+R,~~n\geq 1,
\end{equation}
then we have the following estimates:

(a) when $0<\alpha<1$,
\begin{equation}\label{2.30}
\varepsilon^n \leq   \left(\sum_{k=0}^{n-1}g_k\right)^{-1}R \leq  n^\alpha \Gamma(1-\alpha)R;
\end{equation}

(b) when $\alpha \rightarrow1$,
\begin{equation}\label{2.31}
\varepsilon^n \leq     nR.
\end{equation}
\end{lemma}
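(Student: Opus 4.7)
The plan is to prove both estimates by induction on $n$, with the key ingredient being the monotonicity and positivity of $\sigma_n := \sum_{k=0}^{n-1} g_k$ supplied by Lemma 2.1. The unifying observation is that $-\sum_{k=1}^{n-1} g_k = g_0 - \sigma_n = 1 - \sigma_n$, so the ``constant ansatz'' $\varepsilon^k \equiv E$ in (2.29) forces $E \leq R/\sigma_n$; this suggests $R/\sigma_n$ is the sharp bound and hence the correct target for the induction.

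For part (a), I would induct on $n \geq 1$ with hypothesis $\varepsilon^s \leq R/\sigma_s$. The base case $n=1$ is immediate since $\sigma_1 = g_0 = 1$ and the sum in (2.29) is empty, so $\varepsilon^1 \leq R = R/\sigma_1$. For the inductive step, estimate (2.18) from Lemma 2.1 gives $\sigma_{n+1} < \sigma_n$, so $\sigma_s$ is strictly decreasing in $s$; in particular $\sigma_s \geq \sigma_n$ for all $s \leq n$, which combined with the induction hypothesis yields $\varepsilon^s \leq R/\sigma_s \leq R/\sigma_n$ for $s \leq n-1$. Plugging into (2.29) and using $-g_k \geq 0$ for $k \geq 1$,
\begin{equation*}
\varepsilon^n \leq \frac{R}{\sigma_n}\sum_{k=1}^{n-1}(-g_k) + R = \frac{R}{\sigma_n}\bigl(1 - \sigma_n\bigr) + R = \frac{R}{\sigma_n},
\end{equation*}
which closes the induction. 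The second inequality in (2.30) is then immediate from the lower bound $\sigma_n > 1/(n^\alpha \Gamma(1-\alpha))$ supplied by (2.17).

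For part (b), the $\alpha \to 1$ limit must be handled separately, since the bound from (a) diverges as $\Gamma(1-\alpha) \to \infty$. At $\alpha = 1$ one has $g_0 = 1$, $g_1 = -1$, and $g_k = 0$ for $k \geq 2$, so (2.29) degenerates into the simple recursion $\varepsilon^n \leq \varepsilon^{n-1} + R$; iterating from the base case $\varepsilon^1 \leq R$ gives $\varepsilon^n \leq nR$. The main obstacle, and the reason (b) is stated separately rather than as a corollary of (a), is precisely this degeneracy: $\sigma_n \to 0$ as $\alpha \to 1$ so that $R/\sigma_n$ blows up, and one must exploit the collapse of the convolution tail $\{g_k\}_{k \geq 2}$ directly rather than by passing to a limit in part (a). Beyond this observation, all remaining work is bookkeeping with the sign pattern and summation identities for $\{g_k\}$ recorded in Lemma 2.1.
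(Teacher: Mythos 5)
Your part (a) is correct and is essentially the paper's proof: the same induction on the target $R/\sigma_n$ with $\sigma_n=\sum_{k=0}^{n-1}g_k$, the same use of the monotonicity of $\sigma_n$ to replace $(\sigma_{n-k})^{-1}$ by $(\sigma_n)^{-1}$, the same telescoping identity $-\sum_{k=1}^{n-1}g_k=1-\sigma_n$, and the same appeal to (\ref{2.17}) for the final bound. (Citing (\ref{2.18}) for $\sigma_{n+1}<\sigma_n$ is more than you need; $g_n<0$ already gives it.)

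Part (b), however, has a genuine gap. You prove the bound $\varepsilon^n\le nR$ only at $\alpha=1$ exactly, where $g_k=0$ for $k\ge 2$ and the recursion collapses to $\varepsilon^n\le\varepsilon^{n-1}+R$. But the statement (and its use in Theorem \ref{theorem2.3}) concerns $\alpha\to 1$ with $\alpha\in(0,1)$, where every $g_k$ is strictly negative and the convolution tail does \emph{not} collapse; your closing remark that the rest is ``bookkeeping exploiting the collapse of the tail'' points in the wrong direction, since a perturbative argument off the $\alpha=1$ case would have to control a sum of up to $N$ small terms and would not come out as mere bookkeeping. The paper's proof of (b) is a direct induction valid uniformly for all $\alpha\in(0,1)$ and needs no degeneracy at all: assuming $\varepsilon^s\le sR$ for $s\le n-1$, one writes
\begin{equation*}
\varepsilon^n\le -\sum_{k=1}^{n-1}g_k\,(n-k)R+R\le -(n-1)R\sum_{k=1}^{n-1}g_k+R\le (n-1)R+R=nR,
\end{equation*}
using only $-g_k\ge 0$, $n-k\le n-1$, and $-\sum_{k=1}^{n-1}g_k=1-\sigma_n\le 1$ from Lemma \ref{lemma2.1}. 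With this replacement your argument is complete; as written, (b) is established only on the boundary case $\alpha=1$.
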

\begin{proof}
It is worth to noting that the first term on the right hand side of (\ref{2.29}) automatically vanishes when $n =1$.

(1) Case $0<\alpha<1$: We  prove the following estimate by the  mathematical induction,
 $$\varepsilon^n \leq   \left(\sum_{k=0}^{n-1}g_k\right)^{-1}R .$$
Eq. (\ref{2.29}) holds obviously for $n=1$. Supposing  that
$$\varepsilon^s \leq   \left(\sum_{i=0}^{s-1}g_i\right)^{-1}R,~~s=1,2,\ldots,n-1,$$
then form (\ref{2.29}) we have
\begin{equation*}
\begin{split}
\varepsilon^n
&\leq  - \sum_{k=1}^{n-1} g_k  \varepsilon^{n-k}+R
\leq - \sum_{k=1}^{n-1} g_k  \left(\sum_{i=0}^{n-k-1}g_i\right)^{-1}R+R
\leq - \sum_{k=1}^{n-1} g_k  \left(\sum_{i=0}^{n-1}g_i\right)^{-1}R+R\\
&\leq \left(1- \sum_{k=0}^{n-1}g_k \right)   \left(\sum_{i=0}^{n-1}g_i\right)^{-1}R+R
 \leq   \left(\sum_{i=0}^{n-1}g_i\right)^{-1}R.
\end{split}
\end{equation*}
According to  (\ref{2.17}) and the above inequality, it leads to
$$\varepsilon^n \leq   \left(\sum_{k=0}^{n-1}g_k\right)^{-1}R \leq  n^\alpha \Gamma(1-\alpha)R. $$

(2) Now we consider the case $\alpha\rightarrow 1$. Since $\Gamma(1-\alpha)\rightarrow \infty$ as $\alpha \rightarrow 1$ in  the estimate (\ref{2.30}). Therefore, we need to look for an estimate of other form.
We  prove the following estimate by the  mathematical induction:
 $$\varepsilon^n \leq   nR .$$
Eq. (\ref{2.29}) holds obviously for $n=1$. Supposing  that
$$\varepsilon^s \leq   sR,~~s=1,2,\ldots,n-1,$$
thus, from (\ref{2.29}) we get
$$  \varepsilon^n\leq  - \sum_{k=1}^{n-1} g_k  \varepsilon^{n-k}+R
\leq  - \sum_{k=1}^{n-1} g_k  (n-k)R+R
\leq  - \sum_{k=1}^{n-1} g_k  (n-1)R+R
\leq   (n-1)R+R=nR.
$$
\end{proof}

\begin{theorem}\label{theorem2.2}
 Let $P_i^n$ be the approximate solution of $P(x_i,t_n)$ computed by the difference scheme (\ref{2.11}) with the assumption $0 \leq \rho U_i \leq \eta$.  Then
$$
||P(x_i,t_n)-P_i^n|| \leq C_P \Gamma(1-\alpha)l^{1/2} T^\alpha (\tau+h^2),~~0<\alpha<1,
$$
where $C_P$ is defined by  (\ref{2.7}) and  $(x_i,t_n)\in (0,l) \times (0,T]$, $i=1,2,\ldots,M-1;~ n=1,2,\ldots,N$.
\end{theorem}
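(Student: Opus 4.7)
The plan is to adapt the stability argument of Theorem \ref{theorem2.1}, with the truncation error $R_i^n$ from (\ref{2.9}) playing the role of a source term. First I would note that the exact solution $P(x_i,t_n)$ satisfies the discrete scheme (\ref{2.10}) up to the defect $R_i^n$, so the pointwise error $e_i^n := P(x_i,t_n) - P_i^n$ satisfies $e_i^0 = 0$, $e_0^n = e_M^n = 0$, and
\begin{equation*}
  e_i^n - \frac{\kappa_\alpha \tau^\alpha}{h^2}\left(e_{i+1}^n - 2 e_i^n + e_{i-1}^n\right)
  = -\sum_{k=1}^{n-1} e^{-\rho U_i k\tau} g_k\, e_i^{n-k} + R_i^n,
\end{equation*}
since the initial-value sums cancel because $P_i^0 = \phi(x_i) = P(x_i,0)$.

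Next I would take the discrete $L^2$ inner product with $e^n$: multiply by $h e_i^n$ and sum over $i=1,\ldots,M-1$. By (\ref{2.15}) the discrete Laplacian contributes $\tau^\alpha \kappa_\alpha ||e^n_{\overline{x}}]|^2 \geq 0$, so the left-hand side is at least $||e^n||^2$. For the memory term, fix $k$ and apply the discrete Cauchy-Schwarz inequality together with $e^{-\rho U_i k\tau}\in(0,1]$ and $-g_k > 0$ from Lemma \ref{lemma2.1} to obtain
\begin{equation*}
  -h\sum_{i=1}^{M-1}\sum_{k=1}^{n-1} e^{-\rho U_i k\tau} g_k\, e_i^{n-k} e_i^n
  \le \sum_{k=1}^{n-1}(-g_k)\, ||e^{n-k}||\cdot ||e^n||.
\end{equation*}
Similarly $(R^n,e^n) \le ||R^n||\cdot ||e^n||$, where the uniform pointwise bound (\ref{2.9}) gives $||R^n||\le l^{1/2} C_P \tau^\alpha(\tau+h^2)$. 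After dividing through by $||e^n||$ (the case $||e^n||=0$ being trivial) one arrives at the scalar recursion
\begin{equation*}
  ||e^n|| \le -\sum_{k=1}^{n-1} g_k\, ||e^{n-k}|| + l^{1/2} C_P \tau^\alpha(\tau+h^2),
\end{equation*}
which has exactly the form hypothesized in Lemma \ref{lemma2.2}.

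Applying Lemma \ref{lemma2.2}(a) with $\varepsilon^n=||e^n||$ and $R = l^{1/2}C_P\tau^\alpha(\tau+h^2)$ then yields $||e^n|| \le n^\alpha\Gamma(1-\alpha)\,l^{1/2}C_P\tau^\alpha(\tau+h^2)$, and using $n\tau\le T$ to absorb $n^\alpha\tau^\alpha$ into $T^\alpha$ delivers the claimed bound. The delicate point I expect to be the main obstacle is the choice of $\varepsilon^n=||e^n||$ rather than $||e^n||^2$: one must resist the temptation (used in the stability proof) of splitting the memory term by $2ab\le a^2+b^2$, and instead keep the product structure so that a single factor of $||e^n||$ can be cancelled. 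This is what forces $||R^n||$ to appear linearly in $R$ and hence matches the convergence rate $\tau+h^2$; splitting would have contaminated the estimate with an extra $||e^n||^2$ from the source term. Beyond this, the spectral positivity of the discrete Laplacian and the sign properties of the $g_k$ are inherited verbatim from Theorem \ref{theorem2.1}.
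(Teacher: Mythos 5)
Your proposal is correct and reaches the stated bound, but the way you close the energy estimate differs from the paper in a specific technical respect worth noting. You keep the memory and source terms in product form via Cauchy--Schwarz, cancel one factor of $\|e^n\|$, and apply Lemma \ref{lemma2.2} to the \emph{linear} quantity $\varepsilon^n=\|e^n\|$ with $R=l^{1/2}C_P\tau^\alpha(\tau+h^2)$; this is clean and the division step is legitimate (the case $\|e^n\|=0$ being trivial, and the discarded discrete-Laplacian term being nonnegative by (\ref{2.15})). The paper instead does exactly what you say one must ``resist'': it splits the memory term by $2ab\le a^2+b^2$ as in (\ref{2.27}) and handles the source term with a \emph{weighted} Young inequality, choosing the weight $\sum_{k=0}^{n-1}g_k/\tau^\alpha$ in (\ref{2.35}) so that the resulting $\tfrac12\sum_{k=0}^{n-1}g_k\|\varepsilon^n\|^2$ combines with the $-\tfrac12\sum_{k=1}^{n-1}g_k\|\varepsilon^n\|^2$ from the memory term to leave precisely $\tfrac12\|\varepsilon^n\|^2$, which is absorbed into the left-hand side; Lemma \ref{lemma2.2} is then applied to $\|\varepsilon^n\|^2$ with $R=C_2\tau^\alpha(\tau+h^2)^2$ and a square root taken at the end. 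So your closing remark that splitting ``would have contaminated the estimate'' is not accurate --- the contamination is curable by the right choice of weight, and that is the paper's route --- but your alternative avoids having to discover that weight at all, at the cost only of a division argument. Both proofs rely on the same ingredients (the error equation (\ref{2.32}), the sign and summation properties of $g_k$ from Lemma \ref{lemma2.1}, and the Gr\"onwall-type Lemma \ref{lemma2.2}), and both produce the identical constant $C_P\Gamma(1-\alpha)l^{1/2}T^\alpha$.
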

\begin{proof}
Similar to the proof of  \cite{Chen:09}, let $P(x_i,t_n)$ be the exact solution of (\ref{2.1}) at the mesh point $(x_i,t_n)$,
and $\varepsilon_i^n=P(x_i,t_n)-P_i^n$.
Subtracting (\ref{2.8}) from (\ref{2.11}) and using $\varepsilon_i^0=0$, we obtain
\begin{equation}\label{2.32}
 \varepsilon_i^n-\frac{\kappa_\alpha \tau^{\alpha}}{h^2} \left(\varepsilon_{i+1}^n-2\varepsilon_{i}^n+\varepsilon_{i-1}^n\right)
=-\sum_{k=1}^{n-1}e^{-\rho U_i k \tau}g_k \varepsilon_i^{n-k}+R_i^n,~~n\geq 1,
\end{equation}
where  $R_i^n $ is defined by (\ref{2.9}).

Multiplying (\ref{2.32}) by $h\varepsilon_i^n$ and summing up for $i$ from $1$ to $M-1$, there exists
\begin{equation}\label{2.33}
\begin{split}
&h\sum_{i=1}^{M-1}(\varepsilon_i^n)^2-\tau^{\alpha}\kappa_\alpha h \sum_{i=1}^{M-1}\frac{\varepsilon_{i+1}^n-2\varepsilon_{i}^n+\varepsilon_{i-1}^n}{h^2}\varepsilon_i^n\\
&\quad =-h\sum_{i=1}^{M-1}\sum_{k=1}^{n-1}e^{-\rho U_i k \tau}g_k\varepsilon_i^{n-k}\varepsilon_i^n
+ h\sum_{i=1}^{M-1}R_i^n\varepsilon_i^n.
\end{split}
\end{equation}
It follows from the proof of Theorem \ref{theorem2.1} that
\begin{equation}\label{2.34}
\begin{split}
&h\sum_{i=1}^{M-1}(\varepsilon_i^n)^2=||\varepsilon^n||^2; \\
&-\tau^{\alpha}\kappa_\alpha h \sum_{i=1}^{M-1}\frac{\varepsilon_{i+1}^n-2\varepsilon_{i}^n+\varepsilon_{i-1}^n}{h^2}\varepsilon_i^n  \geq 0;\\
&-h\sum_{i=1}^{M-1}\sum_{k=1}^{n-1}e^{-\rho U_i k \tau}g_k\varepsilon_i^{n-k}\varepsilon_i^n
\leq  - \frac{1}{2}\sum_{k=1}^{n-1} g_k  \left(||\varepsilon^{n-k}||^2+||\varepsilon^n||^2 \right).
\end{split}
\end{equation}
According to (\ref{2.9}), (\ref{2.7})  and (\ref{2.17}), we obtain \cite{Chen:09}
\begin{equation}\label{2.35}
\begin{split}
h\sum_{i=1}^{M-1}\left| R_i^n\varepsilon_i^n \right|
&= \tau^\alpha h\sum_{i=1}^{M-1} \left|r_i^n\varepsilon_i^n \right|
\leq\tau^\alpha h\sum_{i=1}^{M-1}\left[\frac{\tau^\alpha}{2\sum_{k=0}^{n-1}g_k}(r_i^n)^2+\frac{\sum_{k=0}^{n-1}g_k}{2\tau^\alpha}(\varepsilon_i^n)^2\right]\\
&=\frac{\tau^{2\alpha} h}{2\sum_{k=0}^{n-1}g_k}\sum_{i=1}^{M-1}(r_i^n)^2+\frac{1}{2}\sum_{k=0}^{n-1}g_k||\varepsilon^n||^2\\
&\leq \frac{\tau^{2\alpha}n^\alpha \Gamma(1-\alpha) }{2}h(M-1) C_P^2(\tau+h^2)^2+\frac{1}{2}\sum_{k=0}^{n-1}g_k||\varepsilon^n||^2\\
&\leq \frac{T^\alpha \Gamma(1-\alpha)lC_P^2 }{2}\tau^{\alpha} (\tau+h^2)^2+\frac{1}{2}\sum_{k=0}^{n-1}g_k||\varepsilon^n||^2\\
&=   \frac{C_2 }{2}\tau^{\alpha} (\tau+h^2)^2+\frac{1}{2}\sum_{k=0}^{n-1}g_k||\varepsilon^n||^2,
\end{split}
\end{equation}
where $(x_i,t_n)\in (0,l) \times (0,T]$, and
\begin{equation}\label{2.36}
C_2=lC_P^2 \Gamma(1-\alpha) T^\alpha.
\end{equation}
According to (\ref{2.34}) and (\ref{2.35}), there exists
\begin{equation}\label{2.37}
\begin{split}
||\varepsilon^n||^2
& \leq  - \frac{1}{2}\sum_{k=1}^{n-1} g_k  \left(||\varepsilon^{n-k}||^2+||\varepsilon^n||^2 \right)
  + \frac{C_2 }{2}\tau^{\alpha} (\tau+h^2)^2+\frac{1}{2}\sum_{k=0}^{n-1}g_k||\varepsilon^n||^2\\
& =  - \frac{1}{2}\sum_{k=1}^{n-1} g_k  ||\varepsilon^{n-k}||^2
  + \frac{C_2 }{2}\tau^{\alpha} (\tau+h^2)^2+\frac{1}{2}||\varepsilon^n||^2,
\end{split}
\end{equation}
that is
\begin{equation}\label{2.38}
\begin{split}
||\varepsilon^n||^2 \leq   - \sum_{k=1}^{n-1} g_k  ||\varepsilon^{n-k}||^2 + C_2 \tau^{\alpha} (\tau+h^2)^2.
\end{split}
\end{equation}
According to (\ref{2.36})-(\ref{2.38}) and Lemma \ref{lemma2.2}, we have
 \begin{equation*}
||\varepsilon^n||^2 \leq   n^\alpha \Gamma(1-\alpha) C_2\tau^{\alpha} (\tau+h^2)^2
=lC_P^2 \Gamma(1-\alpha) T^\alpha \Gamma(1-\alpha) T^\alpha(\tau+h^2)^2.
\end{equation*}
Hence
\begin{equation*}
\begin{split}
||P(x_i,t_n)-P_i^n||=||\varepsilon_i^n|| \leq  C_P \Gamma(1-\alpha)l^{1/2} T^\alpha (\tau+h^2).
\end{split}
\end{equation*}
\end{proof}

Besides the discrete $L^2$ norm, the unconditional stability and convergence can also be obtained in $L^\infty$ norm. In the following theorem, we present the convergent result in $L^\infty$ norm; because of the similar proof, we omit the proof of unconditional stability in $L^\infty$ norm.

\begin{theorem}\label{theorem2.3}
Let $P_i^n$ be the approximate solution of $P(x_i,t_n)$ computed by use of
the difference scheme (\ref{2.11}) with the assumption $0 \leq \rho U_i \leq \eta$.  Then the error estimates are
$$
||P(x_i,t_n)-P_i^n||_{\infty} \leq  C_P\Gamma(1-\alpha)T^\alpha (\tau+h^2),~~{\rm for}~~0<\alpha <1;
$$
and
$$
||P(x_i,t_n)-P_i^n||_{\infty} \leq C_PT \tau^{\alpha-1}(\tau+h^2),~~{\rm for}~~\alpha \rightarrow 1,
$$
where $C_P$ is defined by  (\ref{2.7}) and  $(x_i,t_n)\in (0,l) \times (0,T]$, $i=1,2,\ldots,M-1;~ n=1,2,\ldots,N$.
\end{theorem}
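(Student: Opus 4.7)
The plan is to mirror the $L^2$ argument of Theorem \ref{theorem2.2}, but to dispose of the discrete Laplacian on the left-hand side of the error equation (\ref{2.32}) by the discrete maximum principle instead of by summation by parts. Concretely, starting from
\[
\varepsilon_i^n-\frac{\kappa_\alpha \tau^{\alpha}}{h^2}\bigl(\varepsilon_{i+1}^n-2\varepsilon_{i}^n+\varepsilon_{i-1}^n\bigr)
=-\sum_{k=1}^{n-1}e^{-\rho U_i k \tau}g_k\,\varepsilon_i^{n-k}+R_i^n,\qquad \varepsilon_0^n=\varepsilon_M^n=0,
\]
I would pick an index $i^{\ast}$ at which $|\varepsilon_{i^{\ast}}^n|=\|\varepsilon^n\|_\infty$. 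Replacing $\varepsilon$ by $-\varepsilon$ if necessary, we may assume $\varepsilon_{i^{\ast}}^n\ge 0$, so that $\varepsilon_{i^{\ast}+1}^n-2\varepsilon_{i^{\ast}}^n+\varepsilon_{i^{\ast}-1}^n\le 0$ and
\[
\|\varepsilon^n\|_\infty=\varepsilon_{i^{\ast}}^n\le \varepsilon_{i^{\ast}}^n-\frac{\kappa_\alpha \tau^{\alpha}}{h^2}\bigl(\varepsilon_{i^{\ast}+1}^n-2\varepsilon_{i^{\ast}}^n+\varepsilon_{i^{\ast}-1}^n\bigr).
\]

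Next I would bound the right-hand side of the error equation at $i=i^{\ast}$. Using Lemma \ref{lemma2.1} ($g_k<0$ for $k\ge 1$), the assumption $0\le \rho U_i\le \eta$ (which gives $0<e^{-\rho U_{i^{\ast}} k\tau}\le 1$), and $|\varepsilon_{i^{\ast}}^{n-k}|\le \|\varepsilon^{n-k}\|_\infty$, each summand satisfies
\[
\bigl(-e^{-\rho U_{i^{\ast}} k\tau}g_k\bigr)\,\varepsilon_{i^{\ast}}^{n-k}\le (-g_k)\,\|\varepsilon^{n-k}\|_\infty,
\]
while (\ref{2.9}) yields $|R_{i^{\ast}}^n|\le C_P\,\tau^{\alpha}(\tau+h^2)$. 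Combining, we arrive at the scalar recursion
\[
\|\varepsilon^n\|_\infty\le -\sum_{k=1}^{n-1}g_k\,\|\varepsilon^{n-k}\|_\infty+C_P\,\tau^{\alpha}(\tau+h^2),\qquad n\ge 1,
\]
with $\|\varepsilon^0\|_\infty=0$, which is precisely of the form (\ref{2.29}) with $R=C_P\tau^{\alpha}(\tau+h^2)$.

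Applying Lemma \ref{lemma2.2}(a) in the regime $0<\alpha<1$ then gives
\[
\|\varepsilon^n\|_\infty\le n^{\alpha}\Gamma(1-\alpha)\,C_P\,\tau^{\alpha}(\tau+h^2)\le C_P\,\Gamma(1-\alpha)\,T^{\alpha}(\tau+h^2),
\]
since $n\tau\le T$. For $\alpha\to 1$ the factor $\Gamma(1-\alpha)$ blows up, and I would instead invoke Lemma \ref{lemma2.2}(b) to obtain
\[
\|\varepsilon^n\|_\infty\le n\,C_P\,\tau^{\alpha}(\tau+h^2)=C_P\,(n\tau)\,\tau^{\alpha-1}(\tau+h^2)\le C_P\,T\,\tau^{\alpha-1}(\tau+h^2),
\]
which matches the two bounds stated in the theorem.

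The only nontrivial step is dropping the discrete Laplacian cleanly in $L^\infty$; this is where the discrete maximum-principle argument replaces the summation-by-parts inequality (\ref{2.15}) used for $L^2$. Once that is in place, the remainder is just an invocation of the one-dimensional Gronwall-type Lemma \ref{lemma2.2}, so I expect no further obstacle. The unconditional stability in $L^\infty$ follows by the same sign argument applied to the perturbation equation (\ref{2.20}) with the $\varepsilon_i^0$ term retained, which is why its proof can be omitted as the authors note.
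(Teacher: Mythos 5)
Your proof is correct and follows essentially the same route as the paper: both drop the discrete Laplacian at a maximizing index via a discrete maximum-principle argument (the paper does this with $|\varepsilon_{i_0}^n|$ and the reverse triangle inequality, you with a WLOG sign normalization, which is equivalent), then reduce to the scalar recursion $\|\varepsilon^n\|_\infty \le -\sum_{k=1}^{n-1} g_k \|\varepsilon^{n-k}\|_\infty + R_{\max}$ using $g_k \le d_{i,k}^{\alpha} < 0$, and close with Lemma \ref{lemma2.2}. No gaps.
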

\begin{proof}
Let $P(x_i,t_n)$ be the exact solution of (\ref{2.1}) at the mesh point $(x_i,t_n)$, and denote $\varepsilon_i^n=P(x_i,t_n)-P_i^n$,  $\varepsilon^n=[\varepsilon_0^n,\varepsilon_1^n,\ldots, \varepsilon_{M}^n]$.
Subtracting (\ref{2.8}) from (\ref{2.10}) and using $\varepsilon_i^0=0$, we obtain
\begin{equation*}
 \varepsilon_i^n-\frac{\kappa_\alpha \tau^{\alpha}}{h^2} \left(\varepsilon_{i+1}^n-2\varepsilon_{i}^n+\varepsilon_{i-1}^n\right)
=-\sum_{k=1}^{n-1}{d_{i,k}^\alpha}\varepsilon_i^{n-k}+R_i^n,~~n\geq 1,
\end{equation*}
where  $R_i^n $ is defined by (\ref{2.9}).
Assume that  $$|\varepsilon_{i_0}^{n}|=||\varepsilon^n||_{\infty}=\max \limits _{0\leq i \leq M}|\varepsilon_i^n|$$
and $R_{\max}=\max \limits _{0\leq i \leq M,0\leq n \leq N}|R_i^n|$. Then we  have the following estimates
\begin{equation}\label{2.39}
\begin{split}
||\varepsilon^n||_{\infty}=|\varepsilon_{i_0}^n|
&\leq |\varepsilon_{i_0}^n|
   -\frac{\kappa_\alpha \tau^{\alpha}}{h^2} \left( |\varepsilon_{i_0+1}^n| -2 |\varepsilon_{i_0}^n|+ |\varepsilon_{i_0-1}^n|\right)\\
& =\left(1+2\frac{\kappa_\alpha \tau^{\alpha}}{h^2}\right)|\varepsilon_{i_0}^n|
   -\frac{\kappa_\alpha \tau^{\alpha}}{h^2} \left( |\varepsilon_{i_0+1}^n| + |\varepsilon_{i_0-1}^n|\right)\\
&\leq \left |\varepsilon_{i_0}^n
   -\frac{\kappa_\alpha \tau^{\alpha}}{h^2} \left(\varepsilon_{i_0+1}^n-2 \varepsilon_{i_0}^n + \varepsilon_{i_0-1}^n\right)\right | \\
&=\left |-\sum_{k=1}^{n-1}{d_{i_0,k}^\alpha}\varepsilon_{i_0}^{n-k}+ R_{i_0}^n\right |\\
&\leq -\sum_{k=1}^{n-1}{d_{i_0,k}^\alpha}|| \varepsilon^{n-k}||_{\infty}+ R_{\max}.
\end{split}
\end{equation}
Form (\ref{2.24}) and (\ref{2.39}), there exists
\begin{equation*}
\begin{split}
||\varepsilon^n||_{\infty}
\leq -\sum_{k=1}^{n-1}g_k|| \varepsilon^{n-k}||_{\infty}+ R_{\max}.
\end{split}
\end{equation*}
Hence, using Lemma \ref{lemma2.2}, it leads to
$$||P(x_i,t_n)-P_i^n||_{\infty} \leq  C_P\Gamma(1-\alpha)T^\alpha (\tau+h^2),~~{\rm for}~~0<\alpha <1;$$
and
$$||P(x_i,t_n)-P_i^n||_{\infty} \leq C_PT \tau^{\alpha-1}(\tau+h^2),~~{\rm for}~~\alpha \rightarrow 1.$$
\end{proof}
\begin{remark}\label{remark2.3}
When $\rho$ is an imaginary number, i.e., $\rho=\nu+i\omega$; similar to the proof of Theorem \ref{theorem2.3} but with the assumption $0 \leq \nu U_i \leq \eta$, the same results on numerical stability and convergence can be obtained.
\end{remark}

\section{Finite element method for fractional Feynman-Kac equation}
The proposed method is based on a finite difference scheme on time and Galerkin finite element in space for (\ref{1.3}).
This section is devoted to the stability analysis of the time-stepping scheme and the detailed error analysis of semidiscretization on time
and of full discretization.  In particular, the optimal convergent order is obtained.

\subsection{Variational formulation and finite element approximation for fractional Feynman-Kac equation}

Let $T>0$, $\Omega=(0,l)$,  and $t_n=n\tau$,
$n=0,1,\ldots,N$,   where $\tau=\frac{T}{N}$ is  the time steplength.
Rewriting (\ref{1.5}), and making it subject to the given initial and boundary conditions, we have
\begin{equation}\label{3.1}
 {^s_c}{D}_t^\alpha P(x,t) ={^s\!}D_t^\alpha [P(x,t)-e^{-\rho U(x) t}P(x,0)]= \kappa_\alpha \frac{\partial^2}{\partial x^2} P(x,t),
  ~~0<t \leq T,~~x \in \Omega,
\end{equation}
with the initial and boundary conditions
\begin{equation}\label{3.2}
\begin{split}
&P(x,0)=\phi(x),~~x\in \Omega,\\
&P(0,t)=P(l,t)=0, ~~0<t \leq T.
\end{split}
\end{equation}

Using the first order approximation of (\ref{2.3})  to discretize the time fractional derivative of (\ref{3.1}), denoting $d_{i,k}^{1,\alpha}$ as  $d_{k}^\alpha$, and taking $U(x)=\sigma$ being a constant,
then we obtain
\begin{equation}\label{3.3}
\begin{split}
{^s\!}D_t^\alpha[P(x,t)]_{t=t_n}
&=\tau^{-\alpha}\sum_{k=0}^{n}{d_{k}^\alpha}P(x,t_{n-k})+ {\widetilde r}_n^{(1)}(x); \\
{^s\!}D_t^\alpha [e^{-\rho U(x)  t}P(x,0)]_{t=t_n}
& = {^s\!}D_t^\alpha [e^{-\rho \sigma  t}P(x,0)]_{t=t_n} \\
&=\tau^{-\alpha}\sum_{k=0}^{n}{d_{k}^\alpha}e^{-\rho \sigma  (n-k)\tau}P(x,0)+r_n^{(1)}(x)\\
&=\tau^{-\alpha}\sum_{k=0}^{n}e^{-\rho \sigma  n\tau}g_kP(x,0)+{r}_n^{(1)}(x),
\end{split}
\end{equation}
with
\begin{equation}\label{3.4}
  |r_n^{(1)}(x)| \leq \widetilde{C}_P\tau ~~{\rm and }~~  |\widetilde{r}_n^{(1)}(x)| \leq \widetilde{C}_P\tau.
\end{equation}
Here $\widetilde{C}_P$ is a constant depending only on $P$, and the coefficients
\begin{equation}\label{3.5}
  {d_{k}^\alpha}=e^{-\rho \sigma  k \tau}g_k,~~ g_k=(-1)^k\left ( \begin{matrix} \alpha \\ k\end{matrix} \right ).
\end{equation}
Denoting $P^n(x)$ as an approximation of $P(x,t_n)$, then we get the following time discrete scheme of (\ref{3.1}):
\begin{equation}\label{3.6}
\begin{split}
P^n(x)-\kappa_\alpha \tau^{\alpha} \Delta P^n(x)
&=\sum_{k=0}^{n-1}{d_{k}^\alpha}e^{-\rho \sigma \, (n-k)\tau}P^{0}(x) - \sum_{k=1}^{n-1}{d_{k}^\alpha}P^{n-k}(x)\\
&=\sum_{k=0}^{n-1} e^{-\rho \sigma  n\tau}g_kP^{0}(x) - \sum_{k=1}^{n-1} e^{-\rho \sigma  k \tau}g_kP^{n-k}(x).
\end{split}
\end{equation}
For the simplification, we use $P^n$ to denote $P^n(x)$.
Then the variational formulation of (\ref{3.6}) subject to the boundary condition reads as follows: find $P^n \in H_0^1(\Omega)$ such that
\begin{equation}\label{3.7}
\begin{split}
&(P^n,q)-\kappa_\alpha \tau^{\alpha} (\Delta P^n,q) \\
&\quad =\sum_{k=0}^{n-1} e^{-\rho \sigma  n\tau}g_k(P^{0},q) - \sum_{k=1}^{n-1} e^{-\rho \sigma  k \tau}g_k(P^{n-k},q),~~ \forall q\in H_0^1(\Omega),
\end{split}
\end{equation}
with the initial and boundary conditions
\begin{equation}\label{3.8}
\begin{split}
&P^0(x)=\phi(x),~~x\in \Omega,\\
&P^n(0)=P^n(l)=0, ~~n \geq  1.
\end{split}
\end{equation}

\subsection{Stability analysis and error estimates for the semidiscrete scheme}

\begin{theorem}\label{theorem3.1}
The weak semidiscrete scheme  (\ref{3.7}) with $\rho\sigma$ being positive real number is unconditionally stable in the sense that for all $\tau>0$,  it holds that
$$||P^n||_{\widetilde{H}_0^1(\Omega)} \leq ||P^0||_{L^2},~~n=1,2,\ldots,N,$$
where $||P^n||_{\widetilde{H}_0^1(\Omega)}=\left(||P^n||^2_{L^2} +\kappa_\alpha \tau^{\alpha} ||\nabla P^n||^2_{L^2}\right)^{1/2}$.
\end{theorem}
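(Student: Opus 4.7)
The plan is to mimic the structure of the finite difference proof of Theorem \ref{theorem2.1}, now working at the variational level rather than with discrete inner products. Taking the test function $q = P^n \in H_0^1(\Omega)$ in (\ref{3.7}) and using integration by parts $-(\Delta P^n, P^n) = \|\nabla P^n\|_{L^2}^2$ (which is valid because $P^n$ vanishes on $\partial\Omega$) yields the identity
\begin{equation*}
\|P^n\|_{L^2}^2 + \kappa_\alpha \tau^\alpha \|\nabla P^n\|_{L^2}^2 = \sum_{k=0}^{n-1} e^{-\rho\sigma n\tau} g_k (P^0, P^n) - \sum_{k=1}^{n-1} e^{-\rho\sigma k\tau} g_k (P^{n-k}, P^n).
\end{equation*}
The left-hand side is exactly $\|P^n\|_{\widetilde{H}_0^1(\Omega)}^2$, so everything reduces to controlling the right-hand side by $\|P^0\|_{L^2}^2$.

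Next I would dispose of the inner products via the elementary bound $(u, P^n) \leq \tfrac12(\|u\|_{L^2}^2 + \|P^n\|_{L^2}^2)$, and then replace the exponentially weighted coefficients by the plain $g_k$. Here I would invoke Lemma \ref{lemma2.1}: since $g_k < 0$ for $k\geq 1$ and $e^{-\rho\sigma k\tau}\in(0,1]$ (using $\rho\sigma>0$), one has $-e^{-\rho\sigma k\tau}g_k \leq -g_k$; similarly $e^{-\rho\sigma n\tau}\sum_{k=0}^{n-1}g_k \leq \sum_{k=0}^{n-1}g_k$, since the latter is positive. Combining these, exactly as in the derivation of (\ref{2.28}), produces
\begin{equation*}
\|P^n\|_{\widetilde{H}_0^1(\Omega)}^2 \leq \tfrac12\sum_{k=0}^{n-1}g_k\bigl(\|P^0\|_{L^2}^2+\|P^n\|_{L^2}^2\bigr) - \tfrac12\sum_{k=1}^{n-1}g_k\bigl(\|P^{n-k}\|_{L^2}^2+\|P^n\|_{L^2}^2\bigr).
\end{equation*}
Using $g_0 = 1$, the coefficient of $\|P^n\|_{L^2}^2$ on the right collapses to $\tfrac12$, so the $\|P^n\|_{L^2}^2$ piece on that side can be absorbed into the left via $\|P^n\|_{L^2}^2 \leq \|P^n\|_{\widetilde{H}_0^1(\Omega)}^2$, giving
\begin{equation*}
\|P^n\|_{\widetilde{H}_0^1(\Omega)}^2 \leq \Bigl(\sum_{k=0}^{n-1}g_k\Bigr)\|P^0\|_{L^2}^2 - \sum_{k=1}^{n-1}g_k\,\|P^{n-k}\|_{L^2}^2.
\end{equation*}

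The final step is induction on $n$. The base $n=1$ is immediate because the last sum is empty and $\sum_{k=0}^{0}g_k = 1$. For the inductive step I use $-g_k \geq 0$ to replace $\|P^{n-k}\|_{L^2}^2$ by the (larger) $\|P^{n-k}\|_{\widetilde{H}_0^1(\Omega)}^2$, apply the induction hypothesis $\|P^{s}\|_{\widetilde{H}_0^1(\Omega)} \leq \|P^0\|_{L^2}$ for $s<n$, and exploit the telescoping $\sum_{k=0}^{n-1}g_k - \sum_{k=1}^{n-1}g_k = g_0 = 1$ to conclude $\|P^n\|_{\widetilde{H}_0^1(\Omega)}^2 \leq \|P^0\|_{L^2}^2$.

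The main subtle point I expect is the bookkeeping of signs when transferring from the exponentially weighted $g_k$'s to the plain $g_k$'s: one must verify that the corresponding inner products carry signs compatible with Cauchy--Schwarz so that the inequalities all point the right way. A second delicate point is that the induction must close in the \emph{stronger} $\widetilde{H}_0^1(\Omega)$ norm on the left while only the weaker $L^2$ norm appears on the right; this is what forces the replacement $\|P^{n-k}\|_{L^2} \leq \|P^{n-k}\|_{\widetilde{H}_0^1(\Omega)}$ (allowed because of the negative sign of $g_k$) before applying the hypothesis. Everything else is routine.
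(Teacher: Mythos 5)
Your proof is correct and follows essentially the same route as the paper's: test with $q=P^n$, use the sign properties of $g_k$ and $d_k^\alpha$ from Lemma~\ref{lemma2.1} (via (\ref{3.9})--(\ref{3.10})) to drop the exponential weights, and close the induction through the telescoping $\sum_{k=0}^{n-1}g_k-\sum_{k=1}^{n-1}g_k=g_0=1$. The only cosmetic difference is that you bound the inner products by $\tfrac12\bigl(\|u\|_{L^2}^2+\|P^n\|_{L^2}^2\bigr)$ and work with squared norms (mirroring the paper's Theorem~\ref{theorem2.1}), whereas the paper's own proof of this theorem applies Cauchy--Schwarz and cancels a factor of $\|P^n\|_{L^2}\leq\|P^n\|_{\widetilde{H}_0^1(\Omega)}$ to obtain a recursion linear in the norms; both variants lead to the same recursion and the same induction.
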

\begin{proof}
Taking $q=P^n$  and from (\ref{3.7}), we obtain
$$(P^n,P^n)+\kappa_\alpha \tau^{\alpha} (\nabla P^n,\nabla P^n)
=\sum_{k=0}^{n-1} e^{-\rho \sigma  n\tau}g_k(P^{0},P^n) - \sum_{k=1}^{n-1} e^{-\rho \sigma  k \tau}g_k(P^{n-k},P^n).$$
Since  $e^{-\rho \sigma  n\tau} \in [e^{-\rho \sigma T},1]$, then from (\ref{3.5}) and  (\ref{2.16})  we obtain
\begin{equation}\label{3.9}
g_k \leq d_{k}^\alpha= e^{-\rho \sigma  k \tau}g_k <0,~~  k\geq 1,
\end{equation}
and
\begin{equation}\label{3.10}
0< e^{-\rho \sigma   n\tau} \sum_{k=0}^{n-1} g_k=\sum_{k=0}^{n-1}e^{-\rho \sigma   n\tau}  g_k  \leq \sum_{k=0}^{n-1}g_k, ~~n\geq 1.
\end{equation}
Then using Schwartz inequality, we have
\begin{equation}\label{3.11}
\begin{split}
||P^n||^2_{\widetilde{H}_0^1(\Omega)}
&=\sum_{k=0}^{n-1} e^{-\rho \sigma  n\tau}g_k(P^{0},P^n) - \sum_{k=1}^{n-1} e^{-\rho \sigma  k \tau}g_k(P^{n-k},P^n)\\
&\leq \sum_{k=0}^{n-1} e^{-\rho \sigma  n\tau}g_k||P^{0}||_{L^2}||P^{n}||_{L^2}
- \sum_{k=1}^{n-1} e^{-\rho \sigma  k \tau}g_k||P^{n-k}||_{L^2}||P^{n}||_{L^2}.
\end{split}
\end{equation}
According to (\ref{3.9})-(\ref{3.11}), there exists
\begin{equation}\label{3.12}
\begin{split}
||P^n||_{\widetilde{H}_0^1(\Omega)}
&\leq \sum_{k=0}^{n-1} g_k||P^{0}||_{L^2}- \sum_{k=1}^{n-1} e^{-\rho \sigma  k \tau}g_k||P^{n-k}||_{L^2}\\
&\leq \sum_{k=0}^{n-1} g_k||P^{0}||_{L^2}- \sum_{k=1}^{n-1}g_k||P^{n-k}||_{L^2}.
\end{split}
\end{equation}
Next we  prove $||P^n||_{\widetilde{H}_0^1(\Omega)} \leq ||P^0||_{L^2}$. The inequality (\ref{3.12}) holds obviously when $n=1$. Supposing
\begin{equation*}
||P^s||_{\widetilde{H}_0^1(\Omega)} \leq ||P^0||_{L^2}, ~~{\rm for}~~s=1,2,\ldots,n-1,
\end{equation*}
then from (\ref{3.12}), we obtain
$$||P^n||_{\widetilde{H}_0^1(\Omega)}
\leq \sum_{k=0}^{n-1} g_k||P^{0}||_{L^2}- \sum_{k=1}^{n-1}g_k||P^{n-k}||_{L^2}
\leq \sum_{k=0}^{n-1} g_k||P^{0}||_{L^2}- \sum_{k=1}^{n-1}g_k||P^{0}||_{L^2}=||P^{0}||_{L^2}.$$
The proof is complete.
\end{proof}

\begin{theorem}\label{theorem3.2}
Let $P$ be the exact solution of (\ref{3.1})-(\ref{3.2}), and $\{P^n\}_{n=0}^N$ be the time discrete  solution of (\ref{3.7})
with the initial and boundary conditions (\ref{3.8}) under the assumption $\rho\sigma>0$. Then we have the following error estimates:
$$
||P(t_n)-P^n||_{\widetilde{H}_0^1(\Omega)}\leq \widetilde{C}_P \Gamma(1-\alpha) T^\alpha   \tau,~~{\rm for}~~0<\alpha <1;
$$
and
$$
||P(t_n)-P^n||_{\widetilde{H}_0^1(\Omega)}\leq \widetilde{C}_P  T   \tau^\alpha,~~\alpha \rightarrow 1,
$$
where $\widetilde{C}_P$ is defined by  (\ref{3.4}) and  $(x,t_n)\in (0,l) \times (0,T]$, $ n=1,2,\ldots,N$.
\end{theorem}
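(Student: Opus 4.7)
The plan is to mimic the energy argument used in Theorem~\ref{theorem3.1}, enriched with the truncation remainders $r_n^{(1)}$, $\widetilde{r}_n^{(1)}$ from (\ref{3.3})--(\ref{3.4}), and then close the resulting recursion with Lemma~\ref{lemma2.2}. First I would apply the discretization formula (\ref{3.3}) to the exact solution $P(\cdot,t_n)$, multiply by $\tau^\alpha$, and observe that since $d_0^\alpha=1$ and the $k=n$ terms in the two time sums cancel identically, $P(\cdot,t_n)$ satisfies the same system as (\ref{3.6}) up to a consistency defect $\tau^\alpha R^n(x)$ with
\begin{equation*}
\|R^n\|_{L^2(\Omega)} \leq 2\widetilde{C}_P\sqrt{l}\,\tau
\end{equation*}
by (\ref{3.4}). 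Subtracting (\ref{3.6}) from this perturbed equation and using $\varepsilon^0:=\phi-P^0=0$ (from (\ref{3.2}) and (\ref{3.8})) produces the error equation
\begin{equation*}
\varepsilon^n - \kappa_\alpha\tau^\alpha\Delta\varepsilon^n = -\sum_{k=1}^{n-1} e^{-\rho\sigma k\tau} g_k\,\varepsilon^{n-k} + \tau^\alpha R^n,
\qquad \varepsilon^n := P(\cdot,t_n)-P^n\in H_0^1(\Omega).
\end{equation*}

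Next I would test the weak form of this error equation with $q=\varepsilon^n$ and integrate by parts exactly as in the proof of Theorem~\ref{theorem3.1}, obtaining
\begin{equation*}
\|\varepsilon^n\|_{\widetilde{H}_0^1(\Omega)}^2 = -\sum_{k=1}^{n-1} e^{-\rho\sigma k\tau} g_k(\varepsilon^{n-k},\varepsilon^n) + \tau^\alpha(R^n,\varepsilon^n).
\end{equation*}
The Schwartz inequality, the sign estimate (\ref{3.9}) giving $-e^{-\rho\sigma k\tau}g_k\leq -g_k$ for $k\geq 1$, and the trivial embedding $\|\cdot\|_{L^2(\Omega)}\leq \|\cdot\|_{\widetilde{H}_0^1(\Omega)}$ together bound the right-hand side by $\bigl(-\sum_{k=1}^{n-1} g_k\|\varepsilon^{n-k}\|_{\widetilde{H}_0^1(\Omega)} + \tau^\alpha\|R^n\|_{L^2(\Omega)}\bigr)\|\varepsilon^n\|_{\widetilde{H}_0^1(\Omega)}$. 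Cancelling one factor of $\|\varepsilon^n\|_{\widetilde{H}_0^1(\Omega)}$ (the conclusion is trivial if this vanishes) yields the scalar recursion
\begin{equation*}
\|\varepsilon^n\|_{\widetilde{H}_0^1(\Omega)} \leq -\sum_{k=1}^{n-1} g_k\,\|\varepsilon^{n-k}\|_{\widetilde{H}_0^1(\Omega)} + \tau^\alpha\|R^n\|_{L^2(\Omega)},
\end{equation*}
which is exactly of the form (\ref{2.29}) with $R:=\tau^\alpha\|R^n\|_{L^2(\Omega)}\leq 2\widetilde{C}_P\sqrt{l}\,\tau^{1+\alpha}$.

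Applying part (a) of Lemma~\ref{lemma2.2} and using $n\tau\leq T$ then gives $\|\varepsilon^n\|_{\widetilde{H}_0^1(\Omega)}\leq n^\alpha\Gamma(1-\alpha)\cdot 2\widetilde{C}_P\sqrt{l}\,\tau^{1+\alpha}\leq \widetilde{C}_P\Gamma(1-\alpha)T^\alpha\tau$ (after absorbing numerical constants into $\widetilde{C}_P$), which is the first stated bound. For the limit $\alpha\to 1$ the $\Gamma(1-\alpha)$ factor blows up, so I would switch to part (b) of Lemma~\ref{lemma2.2} instead; this gives $\|\varepsilon^n\|_{\widetilde{H}_0^1(\Omega)}\leq n\cdot R \leq \widetilde{C}_P T\tau^\alpha$, the second asserted estimate.

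The routine obstacle is bookkeeping the constants so that the final $\tau$-power is correct once $\tau^\alpha$ multiplies $\|R^n\|_{L^2(\Omega)}$ inside Lemma~\ref{lemma2.2}. The genuine subtlety I expect is in the energy step: it is crucial that the bilinear form on the left assembles to \emph{exactly} $\|\varepsilon^n\|_{\widetilde{H}_0^1(\Omega)}^2$ so that one factor of $\|\varepsilon^n\|_{\widetilde{H}_0^1(\Omega)}$ can legitimately be cancelled and the induction closed in the same (energy) norm. Without the embedding $\|\cdot\|_{L^2}\leq \|\cdot\|_{\widetilde{H}_0^1(\Omega)}$ upgrading the $L^2$ factors on the right-hand side, Lemma~\ref{lemma2.2} would not apply to the intended norm, and one would be forced into a strictly weaker $L^2$ estimate.
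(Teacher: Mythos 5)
Your proposal is correct and follows essentially the same route as the paper: the same error equation, the same energy test with $q=\varepsilon^n$, cancellation of one factor of $\|\varepsilon^n\|_{\widetilde{H}_0^1(\Omega)}$ via the embedding $\|\cdot\|_{L^2}\leq\|\cdot\|_{\widetilde{H}_0^1(\Omega)}$, and the same induction to close the recursion (the paper merely carries the coefficients $d_k^\alpha$ through the induction and bounds $\bigl(\sum_{k=0}^{n-1}d_k^\alpha\bigr)^{-1}\leq n^\alpha\Gamma(1-\alpha)$ at the end, whereas you downgrade $d_k^\alpha$ to $g_k$ first and invoke Lemma~\ref{lemma2.2} directly — an equivalent step, since that lemma's proof is exactly the paper's inline induction). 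The constant bookkeeping (the factor $\sqrt{l}$ and the split $\tau^\alpha\|R^n\|_{L^2}$ versus the paper's $\|R^n\|_{L^2}\leq\widetilde{C}_P\tau^{1+\alpha}$) is immaterial.
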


\begin{proof}
Define $e^n=P(x,t_n)-P^n(x)$. Using $e^0=0$ and  (\ref{3.1}), (\ref{3.7})  and (\ref{3.4}), there exists
\begin{equation}\label{3.13}
\begin{split}
(e^n,q)+ \kappa_\alpha \tau^{\alpha} (\nabla e^n,\nabla q)
= - \sum_{k=1}^{n-1} e^{-\rho \sigma  k \tau}g_k(e^{n-k},q)+(R^n,q),~~ \forall q\in H_0^1(\Omega),
\end{split}
\end{equation}
where
\begin{equation}\label{3.14}
  ||R^n||_{L_2} \leq \widetilde{C}_P\tau^{1+\alpha}.
\end{equation}
Taking $q=e^n$ in (\ref{3.13}) and from (\ref{3.10}), we obtain
$$
||e^n||^2_{\widetilde{H}_0^1(\Omega)} \leq  - \sum_{k=1}^{n-1} e^{-\rho \sigma  k \tau}g_k ||e^{n-k}||_{L^2}||e^{n}||_{L^2}+||R^{n}||_{L^2}||e^{n}||_{L^2}.
$$
Then from (\ref{3.9}) and (\ref{3.14}),  it leads to
\begin{equation}\label{3.15}
||e^n||_{\widetilde{H}_0^1(\Omega)} \leq  - \sum_{k=1}^{n-1} d^\alpha_{k} ||e^{n-k}||_{L^2}+ \widetilde{C}_P\tau^{1+\alpha}.
\end{equation}

(\uppercase\expandafter{\romannumeral1}) First consider the case $0 < \alpha < 1 $. We start by proving the following estimate:
\begin{equation}\label{3.16}
||e^n||_{\widetilde{H}_0^1(\Omega)} \leq     \left(\sum_{k=0}^{n-1}d^\alpha_{k} \right)^{-1}\widetilde{C}_P \tau^{1+\alpha},~~n\geq 1.
\end{equation}
The inequality (\ref{3.15}) holds obviously when $n=1$. Supposing  that
$$
||e^s||_{\widetilde{H}_0^1(\Omega)} \leq  \left(\sum_{j=0}^{s-1}d^\alpha_{j}\right)^{-1}   \widetilde{C}_P \tau^{1+\alpha},~~s=1,2,\ldots,n-1,
$$
then from (\ref{3.15}) we obtain
\begin{equation*}
\begin{split}
||e^n||_{\widetilde{H}_0^1(\Omega)}
&\leq  - \sum_{k=1}^{n-1} d^\alpha_{k} ||e^{n-k}||_{L^2}+ \widetilde{C}_P\tau^{1+\alpha}\\
&\leq  - \sum_{k=1}^{n-1} d^\alpha_{k}  \left(\sum_{j=0}^{n-k-1}d^\alpha_{j}\right)^{-1}  \widetilde{C}_P\tau^{1+\alpha} + \widetilde{C}_P\tau^{1+\alpha}\\
&\leq \left(1 - \sum_{k=0}^{n-1} d^\alpha_{k} \right)
      \left(\sum_{j=0}^{n-1}d^\alpha_{j}\right)^{-1} \widetilde{C}_P\tau^{1+\alpha}+\widetilde{C}_P\tau^{1+\alpha}\\
&=  \left(\sum_{j=0}^{n-1}d^\alpha_{j}\right)^{-1}\widetilde{ C}_P  \tau^{1+\alpha}.
\end{split}
\end{equation*}
According to (\ref{3.9}) and (\ref{2.17}), there exists
\begin{equation*}
\begin{split}
 &\sum_{k=0}^{n-1}d^\alpha_{k} \geq \sum_{k=0}^{n-1}g_k=-\sum_{k=n}^{\infty}g_k>\frac{1}{n^\alpha \Gamma(1-\alpha)}>0, ~~~~n\geq 1.
  \end{split}
\end{equation*}
Then  using (\ref{3.16}) and the above inequality, we get
\begin{equation}\label{3.17}
||e^n||_{\widetilde{H}_0^1(\Omega)} \leq    \left(\sum_{k=0}^{n-1}d^\alpha_{k} \right)^{-1} \widetilde{C}_P\tau^{1+\alpha}
\leq   \widetilde{C}_P  n^\alpha \Gamma(1-\alpha)  \tau^{1+\alpha}
= \widetilde{C}_P \Gamma(1-\alpha) T^\alpha   \tau.
\end{equation}

(\uppercase\expandafter{\romannumeral2}) Now we consider the case $\alpha\rightarrow 1$. Since $\Gamma(1-\alpha)\rightarrow \infty$ as $\alpha \rightarrow 1$ in  the estimate (\ref{3.17}). Therefore, we need to look for an estimate of other form.
We prove the following estimate by the  mathematical induction:
\begin{equation}\label{3.18}
||e^n||_{\widetilde{H}_0^1(\Omega)}\leq \widetilde{C}_Pn\tau^{1+\alpha},~~n\geq 1.
\end{equation}
It is obvious that (\ref{3.15}) holds when $n=1$. Denoting that
\begin{equation*}
||e^s||_{\widetilde{H}_0^1(\Omega)}\leq \widetilde{C}_Ps\tau^{1+\alpha},~~s= 1,2,\ldots,n-1,
\end{equation*}
and from (\ref{3.15}), it leads to
\begin{equation*}
\begin{split}
||e^n||_{\widetilde{H}_0^1(\Omega)}
&\leq  - \sum_{k=1}^{n-1} d^\alpha_{k} ||e^{n-k}||_{L^2}+ \widetilde{C}_P\tau^{1+\alpha}
 \leq  - \sum_{k=1}^{n-1} d^\alpha_{k}  \widetilde{C}_P(n-k)\tau^{1+\alpha} + \widetilde{C}_P\tau^{1+\alpha}\\
&\leq  - \sum_{k=1}^{n-1} d^\alpha_{k}  \widetilde{C}_P(n-1)\tau^{1+\alpha} + \widetilde{C}_P\tau^{1+\alpha} \leq C_Pn\tau^{1+\alpha}.
\end{split}
\end{equation*}
Hence
$$||e^n||_{\widetilde{H}_0^1(\Omega)}\leq \widetilde{C}_PT \tau^{\alpha},~~{\rm as}~~\alpha\rightarrow 1.$$
\end{proof}

\subsection{Finite element approximation and error estimates for full discretization}
Denote $S_h$ as the piecewise polynomials of degree at most $r-1$ on mesh $\{x_i\}$, and define elliptic or Ritz projection $R_h$ from $H_0^1(\Omega)$
into $S_h$ by the orthogonal relation:
\begin{equation*}
  (\nabla R_hv, \nabla \chi)=  (\nabla v, \nabla \chi), ~~\forall \chi \in S_h, ~~{\rm for}~~v \in H_0^1.
\end{equation*}
Then we have the well-known approximation property \cite{Thomee:06}:
\begin{equation}\label{3.19}
  ||R_hv-v||_{L^2}+h||\nabla (R_hv-v)||_{L^2} \leq Ch^s||v||_s, ~~{\rm for}~~v \in H^s \cap H_0^1, 1\leq s \leq r.
\end{equation}
Letting
\begin{equation}\label{3.20}
\begin{split}
&\tau^{-\alpha}\sum_{k=0}^{n}{d_{k}^\alpha}P(x,t_{n-k})=\tau^{-\alpha}\sum_{k=0}^{n} {d_{k}^\alpha} R_hP(x,t_{n-k})+ r_n^{(2)}(x),
\end{split}
\end{equation}
then combining (\ref{3.3}) and  (\ref{3.20}), we obtain
\begin{equation}\label{3.21}
\begin{split}
{^s\!}D_t^\alpha P(x,t_n)=\tau^{-\alpha}\sum_{k=0}^{n}d_{k}^\alpha R_hP(x,t_{n-k})+r_n(x),
\end{split}
\end{equation}
with
\begin{equation}\label{3.22}
r_n(x)=r_n^{(1)}(x)+ r_n^{(2)}(x).
\end{equation}

Now we give the finite element approximation of (\ref{3.7}):
find $P_h^n \in S_h$ such that
\begin{equation}\label{3.23}
\begin{split}
&(P_h^n,q_h)-\kappa_\alpha \tau^{\alpha} (\Delta P_h^n,q_h)\\
&\quad =\sum_{k=0}^{n-1} e^{-\rho \sigma  n\tau}g_k(P_h^{0},q_h) - \sum_{k=1}^{n-1} e^{-\rho \sigma  k \tau}g_k(P_h^{n-k},q_h),~~ \forall q_h\in S_h(\Omega).
\end{split}
\end{equation}

\begin{lemma}\label{lemma3.1}
The coefficients $d_{k}^\alpha$ defined in (\ref{3.5}) with $\rho\sigma>0$ satisfy
\begin{equation*}
\frac{1}{n^\alpha \Gamma(1-\alpha)}< \sum_{k=0}^{n-1}g_k \leq \sum_{k=0}^{n-1}d_{k}^\alpha \leq \frac{1+(\rho \sigma  T)^\alpha}{n^\alpha},
~~{\rm for}~~n\geq 1.
\end{equation*}
\end{lemma}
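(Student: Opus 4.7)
The plan is to establish the three inequalities separately, reusing the identities on $\{g_k\}$ already proved in Lemma 2.1 together with one simple generating-function identity.

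The leftmost inequality $\frac{1}{n^\alpha\Gamma(1-\alpha)}<\sum_{k=0}^{n-1}g_k$ is already contained in Lemma 2.1, so nothing new is needed there. For the middle inequality $\sum_{k=0}^{n-1}g_k\le \sum_{k=0}^{n-1}d_k^\alpha$, I would simply form the difference
$$
\sum_{k=0}^{n-1}d_k^\alpha-\sum_{k=0}^{n-1}g_k=\sum_{k=1}^{n-1}\bigl(e^{-\rho\sigma k\tau}-1\bigr)g_k,
$$
and observe that for $k\ge 1$ both $e^{-\rho\sigma k\tau}-1\le 0$ (since $\rho\sigma>0$) and $g_k<0$ (by Lemma 2.1), so every summand is nonnegative. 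The $k=0$ contribution vanishes because $d_0^\alpha=g_0=1$.

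The substantive piece is the rightmost inequality. The key observation is that the binomial series gives
$$
\sum_{k=0}^{\infty}g_k z^k=(1-z)^\alpha,
$$
so evaluating at $z=e^{-\rho\sigma\tau}$ (allowed since $|z|<1$) yields $\sum_{k=0}^{\infty}d_k^\alpha=(1-e^{-\rho\sigma\tau})^\alpha$. I would split the target sum as
$$
\sum_{k=0}^{n-1}d_k^\alpha=(1-e^{-\rho\sigma\tau})^\alpha-\sum_{k=n}^{\infty}d_k^\alpha
=(1-e^{-\rho\sigma\tau})^\alpha+\sum_{k=n}^{\infty}e^{-\rho\sigma k\tau}|g_k|,
$$
where I used $d_k^\alpha=-e^{-\rho\sigma k\tau}|g_k|$ for $k\ge 1$. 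Now I bound the two pieces separately: for the first, the elementary inequality $1-e^{-x}\le x$ for $x\ge 0$ together with $\tau=T/N\le T/n$ for $n\le N$ gives
$$
(1-e^{-\rho\sigma\tau})^\alpha\le(\rho\sigma\tau)^\alpha\le\frac{(\rho\sigma T)^\alpha}{n^\alpha};
$$
for the second, dropping the exponential factor (which is at most $1$) and then invoking the upper bound $\sum_{k=0}^{n-1}g_k=-\sum_{k=n}^\infty g_k\le 1/n^\alpha$ from Lemma 2.1 yields
$$
\sum_{k=n}^{\infty}e^{-\rho\sigma k\tau}|g_k|\le\sum_{k=n}^{\infty}|g_k|=\sum_{k=0}^{n-1}g_k\le\frac{1}{n^\alpha}.
$$
Adding the two estimates produces the claimed bound $\frac{1+(\rho\sigma T)^\alpha}{n^\alpha}$.

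The only step I expect to require any care is the generating-function splitting: one must verify that the tail series $\sum_{k=n}^{\infty}d_k^\alpha$ is absolutely convergent so that the decomposition is legitimate, which follows from the well-known asymptotic $|g_k|=O(k^{-1-\alpha})$ together with $e^{-\rho\sigma k\tau}\le 1$. Everything else reduces to invoking Lemma 2.1 and the elementary inequality $1-e^{-x}\le x$.
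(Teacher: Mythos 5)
Your proof is correct and follows essentially the same route as the paper: the generating-function identity $\sum_{k=0}^{\infty}d_k^{\alpha}=(1-e^{-\rho\sigma\tau})^{\alpha}$, the tail splitting, the bound $(1-e^{-\rho\sigma\tau})^{\alpha}\le(\rho\sigma\tau)^{\alpha}\le(\rho\sigma T)^{\alpha}/n^{\alpha}$, and the tail estimate via Lemma 2.1 are all exactly the paper's steps. Your termwise comparison $(e^{-\rho\sigma k\tau}-1)g_k\ge 0$ for the middle inequality is just a rephrasing of the paper's observation that $g_k\le d_k^{\alpha}<0$ for $k\ge 1$.
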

\begin{proof}
From \cite{Chen:13}, we know that
$$(1-\frac{\zeta}{e^{\rho \sigma  \tau}})^\alpha=\sum_{k=0}^\infty d_{k}^\alpha\zeta^k.$$
Taking $\zeta=1$, then there exists
$$(1-e^{-\rho \sigma  \tau})^\alpha=\sum_{k=0}^\infty d_{k}^\alpha,$$
which leads to
\begin{equation}\label{3.24}
  \sum_{k=0}^{n-1} d_{k}^\alpha=(1-e^{-\rho \sigma \tau})^\alpha-\sum_{k=n}^\infty d_{k}^\alpha.
\end{equation}
Using (\ref{3.9}) and Lemma \ref{lemma2.1}, we obtain
$$\frac{1}{n^\alpha \Gamma(1-\alpha)}< \sum_{k=0}^{n-1}g_k \leq \sum_{k=0}^{n-1}d_{k}^\alpha,~~n \geq 1,$$
and
\begin{equation}\label{3.25}
-\sum_{k=n}^{\infty}d_{k}^\alpha \leq -\sum_{k=n}^{\infty}g_k  \leq \frac{1}{n^\alpha}, ~~n \geq 1.
\end{equation}
Next we prove
$$\sum_{k=0}^{n-1}d_{k}^\alpha \leq \frac{1+(\rho \sigma  T)^\alpha}{n^\alpha}.$$
According to (\ref{3.24}) and (\ref{3.25}), there exists
$$
\sum_{k=0}^{n-1} d_{k}^\alpha=(1-e^{-\rho \sigma \tau})^\alpha-\sum_{k=n}^\infty d_{k}^\alpha
\leq (1-e^{-\rho \sigma \tau})^\alpha +\frac{1}{n^\alpha}\leq \frac{(\rho \sigma  T)^\alpha+1}{n^\alpha},
$$
since
$$
(1-e^{-\rho \sigma \tau})^\alpha \leq (\rho \sigma \tau)^\alpha \leq \left(\rho \sigma  \frac{T}{n}\right)^\alpha
=\frac{(\rho \sigma  T)^\alpha}{n^\alpha}.
$$
\end{proof}

\begin{lemma}\label{Lemma3.3}
The truncation error $r_n(x)$ defined by (\ref{3.22}) is bounded by
\begin{equation*}
||r_n(x)||_{L^2} \leq \overline{C}_P \left(\tau+\tau^{-\alpha}\frac{(\rho \sigma  T)^\alpha+1}{n^\alpha}h^r\right),~~n=1,2,\ldots,N,
\end{equation*}
where $\overline{C}_P$ is a constant depending only on $P$.
\end{lemma}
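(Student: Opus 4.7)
The plan is to split $r_n(x) = r_n^{(1)}(x) + r_n^{(2)}(x)$ per (\ref{3.22}) and bound each piece separately in $L^2(\Omega)$. For the first piece the bound is essentially free: by (\ref{3.4}), $|r_n^{(1)}(x)| \le \widetilde{C}_P\tau$ pointwise, so integrating over $\Omega=(0,l)$ gives $||r_n^{(1)}||_{L^2} \le \widetilde{C}_P \sqrt{l}\, \tau$, which is absorbed into the $\overline{C}_P \tau$ summand of the target bound.

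The real effort goes into estimating
\[
r_n^{(2)}(x) = \tau^{-\alpha} \sum_{k=0}^{n} d_k^\alpha (I - R_h) P(x, t_{n-k}).
\]
A direct triangle inequality is too crude, since $|d_0^\alpha|=1$ alone keeps $\sum_{k=0}^n |d_k^\alpha|$ from decaying in $n$. I would therefore add and subtract $(I-R_h)P(x, t_n)$ inside the sum:
\[
r_n^{(2)}(x) = \tau^{-\alpha} (I - R_h) P(x, t_n) \sum_{k=0}^{n} d_k^\alpha + \tau^{-\alpha} \sum_{k=1}^{n} d_k^\alpha (I - R_h)\bigl[P(x, t_{n-k}) - P(x, t_n)\bigr].
\]
For the principal term, the Ritz estimate (\ref{3.19}) gives $||(I-R_h)P(\cdot,t_n)||_{L^2} \le Ch^r||P(\cdot,t_n)||_r$, while Lemma \ref{lemma3.1}, combined with $d_n^\alpha \le 0$, yields $\bigl|\sum_{k=0}^n d_k^\alpha\bigr| \le \frac{(\rho\sigma T)^\alpha + 1}{n^\alpha}$, producing exactly the $\tau^{-\alpha}((\rho\sigma T)^\alpha + 1)n^{-\alpha}h^r$ contribution demanded by the claim.

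For the correction term I would exploit the smoothness of $P$ in time to bound $||P(\cdot, t_{n-k}) - P(\cdot, t_n)||_{H^r} \le Ck\tau$; combined with (\ref{3.19}) this reduces the task to controlling $\sum_{k=1}^n k|d_k^\alpha|$. Using $|d_k^\alpha| = e^{-\rho\sigma k\tau}|g_k| \le |g_k|$ together with the classical asymptotic $|g_k| \sim Ck^{-\alpha-1}$ that already underlies Lemma \ref{lemma2.1}, one obtains $\sum_{k=1}^n k|d_k^\alpha| \le Cn^{1-\alpha}$, so this piece contributes at most $C\tau^{1-\alpha}h^r n^{1-\alpha} \le CT\tau^{-\alpha}n^{-\alpha}h^r$ via $n\tau \le T$, again of the required order. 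This correction estimate is the main obstacle: the time-difference decomposition is what unlocks Lemma \ref{lemma3.1} for the principal term, but it transfers the burden to the sharp coefficient asymptotics $\sum_{k=1}^n k|g_k| = O(n^{1-\alpha})$, which must be verified via the Stirling-type expansion of $g_k$ rather than by the cruder identities used earlier in the paper.
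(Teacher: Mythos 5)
Your proposal is correct, and it takes a genuinely different --- and in fact more careful --- route than the paper. The paper's own proof of Lemma \ref{Lemma3.3} is the direct estimate you rejected at the outset: it writes $\|r_n^{(2)}\|_{L^2}\leq \tau^{-\alpha}\sum_{k=0}^{n}d_{k}^{\alpha}\,\|(I-R_h)P(\cdot,t_{n-k})\|_{L^2}\leq \tau^{-\alpha}\widehat{C}_Ph^{r}\sum_{k=0}^{n}d_{k}^{\alpha}$ and then invokes Lemma \ref{lemma3.1}, keeping the \emph{signed} coefficients $d_k^{\alpha}$ rather than $|d_k^{\alpha}|$. Since $d_k^{\alpha}<0$ for $k\geq 1$, neither of these two inequalities is valid as written (multiplying an upper bound by a negative coefficient reverses the inequality), and the legitimate triangle inequality only gives $\sum_{k=0}^{n}|d_k^{\alpha}|\leq 2$, which destroys the $n^{-\alpha}$ decay --- exactly the obstruction you identified. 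That decay is not cosmetic: it is what makes $n^{\alpha}\Gamma(1-\alpha)\|R_n\|_{L^2}$ of order $\tau+h^{r}$ in the proof of Theorem \ref{theorem3.3}. Your add-and-subtract of $(I-R_h)P(\cdot,t_n)$ repairs this: the principal term carries the genuinely small factor $0<\sum_{k=0}^{n}d_k^{\alpha}\leq\sum_{k=0}^{n-1}d_k^{\alpha}\leq\bigl((\rho\sigma T)^{\alpha}+1\bigr)n^{-\alpha}$ from Lemma \ref{lemma3.1}, and the correction term is handled by $\|P(\cdot,t_{n-k})-P(\cdot,t_n)\|_{r}\leq Ck\tau$ together with $\sum_{k=1}^{n}k|g_k|=O(n^{1-\alpha})$, which does follow from the standard asymptotics $|g_k|=O(k^{-1-\alpha})$ for $g_k=(-1)^k\binom{\alpha}{k}$; your bookkeeping $\tau^{1-\alpha}n^{1-\alpha}h^{r}=(n\tau)\,\tau^{-\alpha}n^{-\alpha}h^{r}\leq T\tau^{-\alpha}n^{-\alpha}h^{r}$ is correct. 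The price of your route is two extra (mild, and arguably indispensable) hypotheses that the paper leaves implicit in ``a constant depending only on $P$'': boundedness of $\partial_tP(\cdot,t)$ in $H^{r}$ and the coefficient asymptotics; what it buys is a proof of the stated bound in which every inequality actually holds.
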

\begin{proof}
Here, $r_n^{(1)}(x)$ is given in (\ref{3.4}),
      $$||r_n^{(1)}(x)||_{L^2} \leq \widetilde{C}_P\tau.$$
From (\ref{3.19}), there exists
\begin{equation}\label{3.26}
  ||R_hP(x,t_{n-k})-P(x,t_{n-k})||_{L^2} \leq \widehat{C}_Ph^r,
\end{equation}
where $\widehat{C}_P$ is a constant depending only on $P$.

Then, using (\ref{3.20}), (\ref{3.26}), and Lemma \ref{lemma3.1} leads to
\begin{equation*}
\begin{split}
||r_n^{(2)}(x)||_{L^2}
&\leq \tau^{-\alpha}\sum_{k=0}^{n}{d_{k}^\alpha}   ||R_hP(x,t_{n-k})-P(x,t_{n-k})||_{L^2} \\
&\leq   \tau^{-\alpha}\widehat{C}_Ph^r\sum_{k=0}^{n}{d_{k}^\alpha}  \leq \tau^{-\alpha}\widehat{C}_Ph^r\frac{(\rho \sigma  T)^\alpha+1}{n^\alpha}.
\end{split}
\end{equation*}
Hence, from (\ref{3.22}) we obtain
\begin{equation*}
\begin{split}
||r_n(x)||_{L^2}
&\leq ||r_n^{(1)}(x)||_{L^2}+ ||r_n^{(2)}(x)||_{L^2}
\leq \widetilde{C}_P\tau+\widehat{C}_P\tau^{-\alpha}\frac{(\rho \sigma  T)^\alpha+1}{n^\alpha}h^r\\
&\leq \overline{C}_P \left(\tau+\tau^{-\alpha}\frac{(\rho \sigma  T)^\alpha+1}{n^\alpha}h^r\right),
\end{split}
\end{equation*}
with $\overline{C}_p=\max(\widetilde{C}_P,\widehat{C}_P).$

\end{proof}

\begin{theorem}\label{theorem3.3}
Let $P$ be the exact solution of (\ref{3.1})-(\ref{3.2}), and $\{P_h^n\}_{n=0}^N$ be the solution of the full discretization scheme  (\ref{3.23})
with the initial condition $P_h^0=R_h\phi$ under the assumption $\rho\sigma>0$. If $P \in  H^r(\Omega) \cap H_0^1(\Omega)$. Then
$$
||P(\cdot,t_n)-P_h^n||_{L^2}\leq C(P,\rho \sigma  T,\alpha)(\tau+h^r),~~0<\alpha <1;
$$
and
$$
||P(\cdot,t_n)-P_h^n||_{L^2}\leq C(P,\rho \sigma  T,\alpha)(\tau^\alpha+\tau^{\alpha-1}h^r+h^r),~~\alpha \rightarrow 1,
$$
where $C(P,\rho \sigma  T,\alpha)$ is a constant depending  only on $P,\rho \sigma  T,\alpha$ and  $(\cdot,t_n)\in (0,l) \times (0,T]$, $ n=1,2,\ldots,N$.
\end{theorem}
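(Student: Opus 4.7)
My plan is to use the classical Ritz-projection splitting
$$P(\cdot,t_n) - P_h^n \;=\; \rho^n + \theta^n,\qquad \rho^n := P(\cdot,t_n) - R_h P(\cdot,t_n),\qquad \theta^n := R_h P(\cdot,t_n) - P_h^n \in S_h.$$
The projection part is immediate from the approximation property (\ref{3.19}): $\|\rho^n\|_{L^2} \le C h^r \|P(\cdot,t_n)\|_{H^r}$. The bulk of the work is then to control $\|\theta^n\|_{L^2}$, which is dominated by $\|\theta^n\|_{\widetilde{H}_0^1(\Omega)}$ in the sense of Theorem \ref{theorem3.1}.

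To produce the error equation for $\theta^n$, I would test the weak form of the exact PDE (\ref{3.1}) at $t=t_n$ against $q_h\in S_h$, combine the time-discretization identity (\ref{3.3}) with the Ritz-projection identity (\ref{3.20}) into the consolidated representation (\ref{3.21})--(\ref{3.22}), and subtract the fully discrete scheme (\ref{3.23}). Because $(\nabla\rho^n,\nabla q_h)=0$ by the defining orthogonality of $R_h$, and because the choice $P_h^0 = R_h\phi$ ensures $\theta^0 = 0$, the projection errors at every time level collapse into the single consolidated source $r_n(x)$ of Lemma \ref{Lemma3.3}, producing an equation of the form
$$
(\theta^n,q_h) + \kappa_\alpha\tau^\alpha(\nabla\theta^n,\nabla q_h) \;=\; -\sum_{k=1}^{n-1} d_k^\alpha(\theta^{n-k},q_h) + \tau^\alpha(r_n,q_h),\qquad \forall q_h\in S_h.
$$
Choosing $q_h=\theta^n$ and repeating the positivity/Schwarz manipulations from the proof of Theorem \ref{theorem3.2} (using $d_k^\alpha\le 0$ for $k\ge 1$ from (\ref{3.9})) reduces the matter to the scalar recursion
$$\|\theta^n\|_{\widetilde{H}_0^1(\Omega)} \;\le\; -\sum_{k=1}^{n-1} d_k^\alpha\,\|\theta^{n-k}\|_{L^2} + \tau^\alpha\|r_n\|_{L^2},\qquad n\ge 1,\quad \theta^0=0.$$

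The remainder is a direct combination of Lemma \ref{Lemma3.3}, which gives $\tau^\alpha\|r_n\|_{L^2}\le \overline{C}_P\bigl(\tau^{1+\alpha}+((\rho\sigma T)^\alpha+1)\,h^r/n^\alpha\bigr)$, with the induction of Lemma \ref{lemma2.2} (Lemma \ref{lemma3.1} supplying the required lower bound on $\sum_{k=0}^{n-1}d_k^\alpha$). For $0<\alpha<1$, Lemma \ref{lemma2.2}(a) contributes the prefactor $n^\alpha\Gamma(1-\alpha)$; combined with $n^\alpha\tau^\alpha\le T^\alpha$ and the cancellation of the $1/n^\alpha$ factor in the spatial part, this yields $\|\theta^n\|_{\widetilde{H}_0^1}\le C(P,\rho\sigma T,\alpha)(\tau+h^r)$. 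For $\alpha\to 1$, Lemma \ref{lemma2.2}(b) contributes the prefactor $n\le T/\tau$, and $n^{1-\alpha}\le (T/\tau)^{1-\alpha}$ produces precisely the mixed order $\tau^\alpha+\tau^{\alpha-1}h^r$. Applying the triangle inequality together with $\|\rho^n\|_{L^2}\le Ch^r$ then delivers both stated estimates.

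The main obstacle, to my mind, is the bookkeeping in the derivation of the clean error equation above. Specifically, I must verify that inserting the Ritz projection into every occurrence of $P(t_{n-k})$ together with using $P_h^0=R_h\phi$ in the initial-data sum $\sum_{k=0}^{n-1}e^{-\rho\sigma n\tau}g_k(P(0),q_h)$ really does absorb all projection residuals into the consolidated source $r_n(x)$ of (\ref{3.22}), leaving no stray term of order $\tau^{-\alpha}h^r/n^\alpha$ outside. Once that algebra is pinned down, the remaining analysis is essentially a rerun of the Theorem \ref{theorem3.2} proof with $r_n$ playing the role previously played by the purely temporal truncation error.
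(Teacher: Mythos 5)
Your proposal is correct and follows essentially the same route as the paper: the paper also works with the Ritz-projection error $\varepsilon^n=P_h^n-R_hP(\cdot,t_n)$, obtains the error equation (\ref{3.27}) with the consolidated source $R_n=\tau^\alpha r_n$ bounded by Lemma \ref{Lemma3.3}, applies Lemma \ref{lemma2.2} (with Lemma \ref{lemma3.1}) to the resulting scalar recursion, and finishes with the triangle inequality against (\ref{3.26}). The bookkeeping you flag as the main obstacle is exactly what the paper's identities (\ref{3.20})--(\ref{3.22}) are set up to handle, so no stray projection term survives.
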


\begin{proof}
Denoting $\varepsilon^n=P_h^n-R_hP(x,t_{n})$, then from (\ref{3.23}) and Lemma \ref{Lemma3.3}, we get the following error equation
\begin{equation}\label{3.27}
(\varepsilon^n,q_h)+\kappa_\alpha \tau^{\alpha} (\nabla \varepsilon^n,\nabla q_h)
=- \sum_{k=1}^{n-1} d_{k}^\alpha(\varepsilon^{n-k},q_h)+(R_n,q_h),~~ \forall q_h\in S_h(\Omega),
\end{equation}
where $||R_n||_{L^2}=\tau^{\alpha}||r_n||_{L^2}\leq \overline{C}_P \left(\tau^{1+\alpha}+\frac{(\rho \sigma  T)^\alpha+1}{n^\alpha}h^r\right)$.
Taking $q_h=\varepsilon^n$ in (\ref{3.27}), it leads to
$$
||\varepsilon^n||_{L^2} \leq  - \sum_{k=1}^{n-1} d_{k}^\alpha ||\varepsilon^{n-k}||_{L^2}+||R_{n}||_{L^2}
\leq  - \sum_{k=1}^{n-1} g_k ||\varepsilon^{n-k}||_{L^2}+||R_{n}||_{L^2}.
$$
According to Lemma  \ref{lemma2.2}, there exists

(a) when $0<\alpha<1$,
\begin{equation}\label{3.28}
\begin{split}
||R_hP(x,t_{n})-P_h^n||_{L^2}
&=||\varepsilon^n||_{L^2} \leq   \left(\sum_{k=0}^{n-1}g_k\right)^{-1}||R_{n}||_{L^2}
 \leq  n^\alpha  \Gamma(1-\alpha)||R_{n}||_{L^2}\\
& \leq \overline{C}_P  \Gamma(1-\alpha) \Big[T^\alpha\tau+\left((\rho \sigma  T)^\alpha+1\right)h^r\Big];
\end{split}
\end{equation}

(b) when $\alpha \rightarrow1$,
\begin{equation}\label{3.29}
\begin{split}
||R_hP(x,t_{n})-P_h^n||_{L^2}
&=||\varepsilon^n||_{L^2} \leq     n||R_{n}||_{L^2}
\leq \overline{C}_P \left(T\tau^{\alpha}+\frac{(\rho \sigma  T)^\alpha+1}{n^{\alpha-1}}h^r\right)\\
&\leq \overline{C}_P \Big[T\tau^{\alpha}+T^{1-\alpha}((\rho \sigma  T)^\alpha+1)\tau^{\alpha-1}h^r\Big].
\end{split}
\end{equation}
Then, using (\ref{3.26}) and  triangle inequality, it leads to
\begin{equation*}
\begin{split}
||P(x,t_n)-P_h^n||_{L^2}
&\leq ||P(x,t_n)-R_hP(x,t_n)||_{L^2}+||R_hP(x,t_n)-P_h^n||_{L^2}\\
&\leq \widehat{C}_Ph^r +||R_hP(x,t_n)-P_h^n||_{L^2}.
\end{split}
\end{equation*}
Hence, according to (\ref{3.28}), (\ref{3.29}) and above inequality, we obtain
\begin{equation*}
\begin{split}
||P(x,t_n)-P_h^n||_{L^2}
&\leq \widehat{C}_Ph^r+\overline{C}_P  \Gamma(1-\alpha) \Big[T^\alpha\tau+\left((\rho \sigma  T)^\alpha+1\right)h^r\Big]\\
&\leq C(P,\rho \sigma  T,\alpha)(\tau+h^r),~~0<\alpha <1;
\end{split}
\end{equation*}
and
\begin{equation*}
\begin{split}
||P(x,t_n)-P_h^n||_{L^2}
&\leq \widehat{C}_Ph^r+\overline{C}_P \!\Big[T\tau^{\alpha}+T^{1-\alpha}((\rho \sigma  T)^\alpha+1)\tau^{\alpha-1}h^r\Big]\\
&\leq C(P,\rho \sigma  T,\alpha)(\tau^\alpha+\tau^{\alpha-1}h^r+h^r),~\alpha \rightarrow 1.
\end{split}
\end{equation*}

\end{proof}
\begin{remark}
 In particular, when $U(x)=0$, the fractional Feynman-Kac equation (\ref{3.1}) reduces to the celebrated fractional Fokker-Planck equation \cite{Barkai:01,Metzler:00}.
Similarly, the optimal convergent order is also obtained for using finite element method to solve fractional Fokker-Planck equation.
\end{remark}

\section{Numerical Results}
We numerically verify the above theoretical results including convergent
orders and numerical stability.  And the $ l_\infty$ norm is used to measure the numerical errors.
Without loss of generality, we add a force term $f(x,t)$ on the right hand side of (\ref{2.1}), (\ref{1.2}) and  (\ref{3.1}), respectively.
For the numerical schemes, including the first and high order ones, of both forward and backward Feynman-Kac equations, the numerical experiments are also performed to illustrate the validity of the algorithms. In the following we reuse $P(x,\rho,t)$, i.e.,  $P(x,t)$ is replaced by $P(x,\rho,t)$. In fact, by using the algorithm of numerical inversion of Laplace transforms \cite{Abate:95}, we numerically get $P(x,A,t)$; and the marginal PDFs of $A$, $J(A):=\int_{-\infty}^{+\infty} P(x,A,t)dx$, and of $x$, $K(x):=\int_{-\infty}^{+\infty} P(x,A,t)dA$ are also calculated; in particular, the values of $K(x)$ are compared with the ones of $KK(x)$ being the solution of the corresponding fractional Fokker-Planck equation, i.e., the fractional Feynman-Kac equations with $U(x)=0$, to further illustrate the effectiveness of the provided schemes.
\subsection{Numerical results  for   $P(x,\rho,t)$}
\begin{example}[Finite Difference; The forward fractional Feynman-Kac equation (\ref{1.2})]\end{example}
Consider the forward fractional Feynman-Kac equation  (\ref{1.2}), on a finite domain  $0< x < 1 $,  $0<t \leq 1$,  with the coefficient   $\kappa_\alpha=0.5$ and $U(x)=x$, $\rho=1+i$, $i=\sqrt{-1}$, the forcing function
\begin{equation*}
\begin{split}
f(x,t)=
&(3+\alpha)e^{-\rho xt}t^{2+\alpha}\sin(\pi x) \\
&   -\kappa_\alpha\frac{\Gamma(4+\alpha)}{\Gamma(3+2\alpha)}t^{2+2\alpha}e^{-\rho xt}\left(\rho^2 t^2 \sin(\pi x)-2\pi \rho t \cos(\pi x)-\pi^2 \sin(\pi x) \right),
\end{split}
\end{equation*}
the initial condition $P(x,\rho,0)=0 $, and the boundary conditions $P(0,\rho, t)=P(1,\rho,t)=0$. Then (\ref{1.2}) has the exact
solution $$P(x,\rho,t)=e^{-\rho x t}t^{3+\alpha}\sin(\pi x). $$
\begin{table}[h]\fontsize{9.5pt}{12pt}\selectfont
 \begin{center}
  \caption {The maximum errors and convergent orders for  (\ref{2.14}) with  $q=4$, when $U(x)=x$, $\rho=1+i$, $i=\sqrt{-1}$, $\kappa_\alpha=0.5$, $h=\tau^2.$} \vspace{5pt}
\begin{tabular*}{\linewidth}{@{\extracolsep{\fill}}*{10}{c}}                                    \hline  
$\tau$& $\alpha=0.1$&  Rate       & $\alpha=0.5$  & Rate       & $\alpha=0.9$ &   Rate    \\\hline
~~~1/10&  6.4452e-005   &             & 9.3365e-005     &            & 6.7634e-005   &         \\
~~~1/20&  4.2081e-006   &  3.9370     & 5.6941e-006     & 4.0353     & 4.1294e-006   & 4.0337   \\
~~~1/40&  2.8663e-007   &  3.8759     & 3.5337e-007     & 4.0102     & 2.5636e-007   & 4.0097  \\
~~~1/80&  1.5245e-008   &  4.2328     & 2.3108e-008     & 3.9348     & 1.5804e-008   & 4.0198   \\ \hline
    \end{tabular*}\label{table:1}
  \end{center}
\end{table}

Table \ref{table:1}  shows  the maximum errors  at time $T=1$
with $h=\tau^2$; and  the numerical results confirm that the scheme (\ref{2.14}) has the global truncation error $\mathcal{O} (\tau^4+h^2)$.

\begin{example}[Finite Difference; The backward fractional Feynman-Kac equation (\ref{2.1})] \end{example}
Consider the backward fractional Feynman-Kac equation  (\ref{2.1}), on a finite domain  $0< x < 1 $,  $0<t \leq 1$,  with the coefficient  $\kappa_\alpha=0.5$ and $U(x)=x$, $\rho=1+i$, $i=\sqrt{-1}$; the forcing function
\begin{equation*}
\begin{split}
f(x,t)=&\frac{\Gamma(4+\alpha)}{\Gamma(4)}e^{-\rho xt}t^3\sin(\pi x)\\
       &  -\kappa_\alpha e^{-\rho xt}(t^{3+\alpha}+1)\left(\rho^2 t^2 \sin(\pi x)-2\pi \rho t \cos(\pi x)-\pi^2 \sin(\pi x) \right),
\end{split}
\end{equation*}
the initial condition $P(x,\rho,0)=\sin(\pi x) $, and the boundary
conditions $P(0,\rho,t)=P(1,\rho,t)=0$. Then (\ref{2.1}) has the exact
solution $$P(x,\rho,t)=e^{-\rho xt}(t^{3+\alpha}+1)\sin(\pi x). $$
\begin{table}[h]\fontsize{9.5pt}{12pt}\selectfont
 \begin{center}
  \caption {The maximum errors and convergent orders for (\ref{2.10}), i.e., $q=1$, when $U(x)=x$, $\rho=1+i$, $i=\sqrt{-1}$, $\kappa_\alpha=0.5$, $\tau=h^2.$} \vspace{5pt}
\begin{tabular*}{\linewidth}{@{\extracolsep{\fill}}*{10}{c}}                                    \hline  
$h $& $\alpha=0.1$&  Rate       & $\alpha=0.5$  & Rate       & $\alpha=0.9$ &   Rate    \\\hline
~~~1/10&  1.1562e-002  &             & 1.1953e-002    &            & 1.3563e-002   &         \\
~~~1/20&  2.9178e-003  &  1.9864    & 3.0170e-003    & 1.9861    & 3.4222e-003  & 1.9867   \\
~~~1/40&  7.3118e-004  &  1.9966    & 7.5624e-004     & 1.9962     &8.5874e-004   & 1.9946   \\
~~~1/80&  1.8290e-004  &  1.9991     & 1.8930e-004     & 1.9981     &  2.1484e-004   & 1.9990  \\ \hline
    \end{tabular*}\label{table:2}
  \end{center}
\end{table}

\begin{table}[h]\fontsize{9.5pt}{12pt}\selectfont
 \begin{center}
  \caption {The maximum errors and convergent orders for  (\ref{2.12}) with  $q=4$, when $U(x)=x$, $\rho=1+i$, $i=\sqrt{-1}$, $\kappa_\alpha=0.5$, $h=\tau^2.$} \vspace{5pt}
\begin{tabular*}{\linewidth}{@{\extracolsep{\fill}}*{10}{c}}                                    \hline  
$\tau$& $\alpha=0.1$&  Rate       & $\alpha=0.5$  & Rate       & $\alpha=0.9$ &   Rate    \\\hline
~~~1/10&  1.1563e-004  &             & 1.0990e-004    &            & 1.0487e-004   &         \\
~~~1/20&  7.2278e-006  &  3.9998     & 6.8693e-006     & 3.9998     & 6.4844e-006   & 4.0154   \\
~~~1/40&  4.5173e-007  &  4.0000     & 4.2926e-007     & 4.0003     & 4.0409e-007   & 4.0042  \\
~~~1/80&  2.7781e-008  &  4.0233     & 2.6565e-008     & 4.0143     &  2.5510e-008  & 3.9855   \\ \hline
    \end{tabular*}\label{table:3}
  \end{center}
\end{table}

Table \ref{table:2} and  Table \ref{table:3} show that the algorithms with $q=1$ and $4$ have the global truncation errors $\mathcal{O} (\tau+h^2)$ and $\mathcal{O} (\tau^4+h^2)$ at time $T=1$, respectively.

\begin{example}[Finite Element; The backward fractional Feynman-Kac equation (\ref{3.1})]\end{example}
We use the finite element method (\ref{3.23}) with the piecewise linear polynomial approximation ($r=2$) in space to solve
the backward fractional Feynman-Kac equation (\ref{3.1}), on a finite domain  $0< x < 1 $,  $0<t \leq 1$,  with the coefficient
$\kappa_\alpha=0.5$, $U(x)=1$, $\rho=1+ i $, $i=\sqrt{-1}$,  the forcing function
\begin{equation*}
\begin{split}
f(x,t)=& \frac{\Gamma(4+\alpha)}{\Gamma(4)}   e^{-\rho t}t^3\sin(\pi x)
        +\kappa_\alpha\pi^2 e^{-\rho t}(t^{3+\alpha}+1)\sin(\pi x) ,
\end{split}
\end{equation*}
the initial condition $P(x,\rho,0)=\sin(\pi x)$,  and the boundary
conditions $P(0,\rho,t)=P(1,\rho,t)=0$. Then (\ref{3.1}) has the exact
solution $$P(x,\rho,t)= e^{-\rho t}(t^{3+\alpha}+1)\sin(\pi x). $$

\begin{table}[h]\fontsize{9.5pt}{12pt}\selectfont
 \begin{center}
  \caption {The maximum errors and convergent orders for the finite element method (\ref{3.23}), when $U(x)=1$, $\rho=1+i$, $i=\sqrt{-1}$, $\kappa_\alpha=0.5$, $\tau=h^2.$} \vspace{5pt}
\begin{tabular*}{\linewidth}{@{\extracolsep{\fill}}*{10}{c}}                                    \hline  
$h$& $\alpha=0.1$&  Rate       & $\alpha=0.5$  & Rate       & $\alpha=0.9$ &   Rate    \\\hline
~~~1/10&  6.5499e-004  &             & 1.6544e-003    &            & 3.6377e-003   &         \\
~~~1/20&  1.6658e-004  &  1.9753     & 4.2039e-004    & 1.9765     & 9.2480e-004   & 1.9758   \\
~~~1/40&  4.1822e-005  &  1.9939     & 1.0552e-004    & 1.9942    & 2.3216e-004   & 1.9940  \\
~~~1/80&  1.0467e-005  &  1.9985     & 2.6407e-005     & 1.9985     & 5.8101e-005   & 1.9985   \\ \hline
    \end{tabular*}\label{table:4}
  \end{center}
\end{table}

Table \ref{table:4} shows the maximum errors at time $T=1$
with $\tau=h^2$, and  the numerical results confirm that the finite element method has the global truncation error $\mathcal{O} (\tau+h^2)$.

\subsection{Simulations with Dirac delta function as initial condition}
Let the joint probability density function $P(x,A,t)$ be a real function of $A$, with $P(x,A,t)=0$ for $A<0$; the Laplace transform and its inversion formula are defined as follows:
\begin{equation}\label{4.1}
\begin{split}
& P(x,\rho ,t)=\mathcal{L}\{ P(x,A,t)\}=\int_0^\infty P(x,A,t)e^{-\rho A}dA,~~~~\rho=\nu+i\omega; \\
& P(x,A ,t)=\mathcal{L}^{-1}\{ P(x,\rho,t)\}=  \frac{1}{2\pi i}\int_{\nu-i\infty}^{\nu+i\infty} P(x,\rho,t)e^{\rho A}d\rho, \\
\end{split}
\end{equation}
where $\nu>0$ is arbitrary, but is greater than the real parts of all the singularities  of $P(x,\rho,t)$.
According to Abate's method \cite{Abate:95} (or see Appendix), we can take
 \begin{equation}\label{4.2}
   \rho_k=\nu+i \frac{\pi}{A}k,~~\mbox {with}~~ \nu=\frac{18.4}{2A},~~ i=\sqrt{-1},~~k=0,1,2,\ldots.
 \end{equation}

Simulate the forward and backward fractional Feynman-Kac equations (\ref{2.14}) and  (\ref{2.12}), respectively, on a finite domain  $0< x < 1 $,  $0<t \leq T$,  with the coefficient   $\kappa_\alpha=0.5$ and
the forcing function $f(x,t)=0$ and take
\begin{equation}\label{4.3}
~~~~~U(x)=\left\{ \begin{array}
{l@{\quad} l}1,~~~~x \in (0.5,1),\\
0,~~~~{\rm otherwise};
\end{array}
\right.
\end{equation}

\begin{equation}\label{4.4}
~~{\rm or}~~~~~~~U(x)=\left\{ \begin{array}
{l@{\quad} l}1,~~~~x \in (0.25,0.75),\\
  0,~~~~{\rm otherwise}.
\end{array}
\right.
\end{equation}
The initial condition $P(x,\rho,0)=\delta_a(x-0.5)$ (Dirac delta function), and the boundary conditions $P(0,\rho,t)=P(1,\rho,t)=0$,
where $\rho=\{\rho_k\}_{k=0}^{35}$. The Dirac delta function is defined by the limit of the sequence of Gaussians
\begin{equation*}
  \delta_a(x)=\frac{1}{2\sqrt{a\pi}}e^{-\frac{x^2}{4a}}~~~~{\rm as}~~~~a\rightarrow 0;
\end{equation*}
and we take $a=0.0005$ as the approximation in numerical computations.

\vskip 0.2cm
The corresponding  procedure of generating Figures \ref{FIG.1}-\ref{FIG.6} is executed as follows:
\begin{description}
\item[(1)] For every fixed $\alpha$, $A$ and $\rho_k$ (k=0,1,\ldots,35), according to   (\ref{2.12}) or (\ref{2.14}),
           we obtain $P(x,\rho_k,t)$ at time $T$ with  $q=2$, $\tau=h=1/500$ in (\ref{2.12}) or (\ref{2.14}).
\item[(2)] From Abate's method \cite{Abate:95} (see the Appendix), we get $P(x,A,t)$.
\item[(3)] According to the composite trapezoidal formula, we get $J(A)=\int_0^1 P(x,A,t)dx$.
\end{description}

\begin{figure}[t]
    \begin{minipage}[t]{0.50\linewidth}
    \includegraphics[scale=0.45]{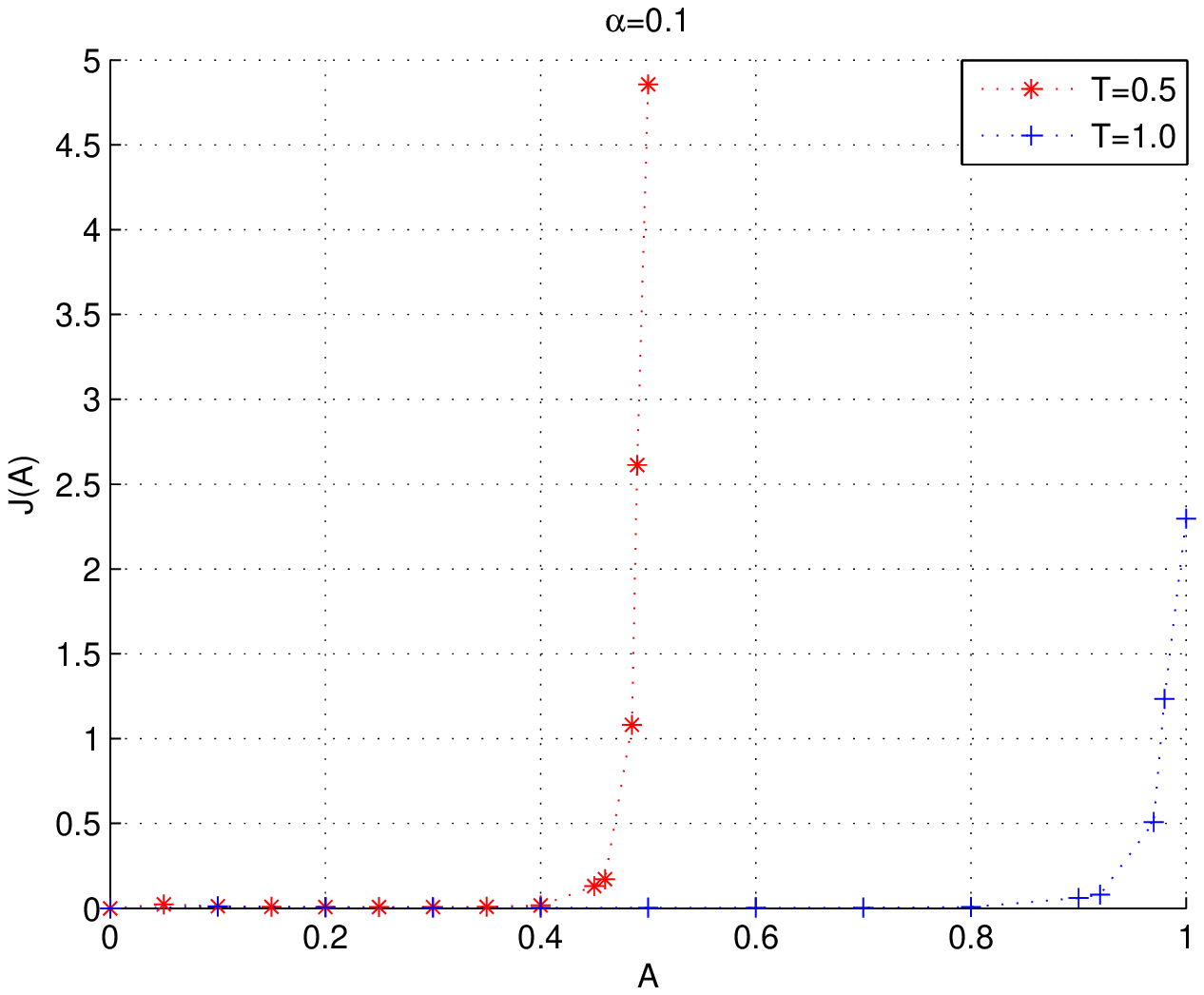}
    \caption{$J(A)$ for (\ref{2.12}) with $U(x)$ defined in (\ref{4.3}).}  \label{FIG.1}
    \end{minipage}
    \begin{minipage}[t]{0.50\linewidth}
    \includegraphics[scale=0.45]{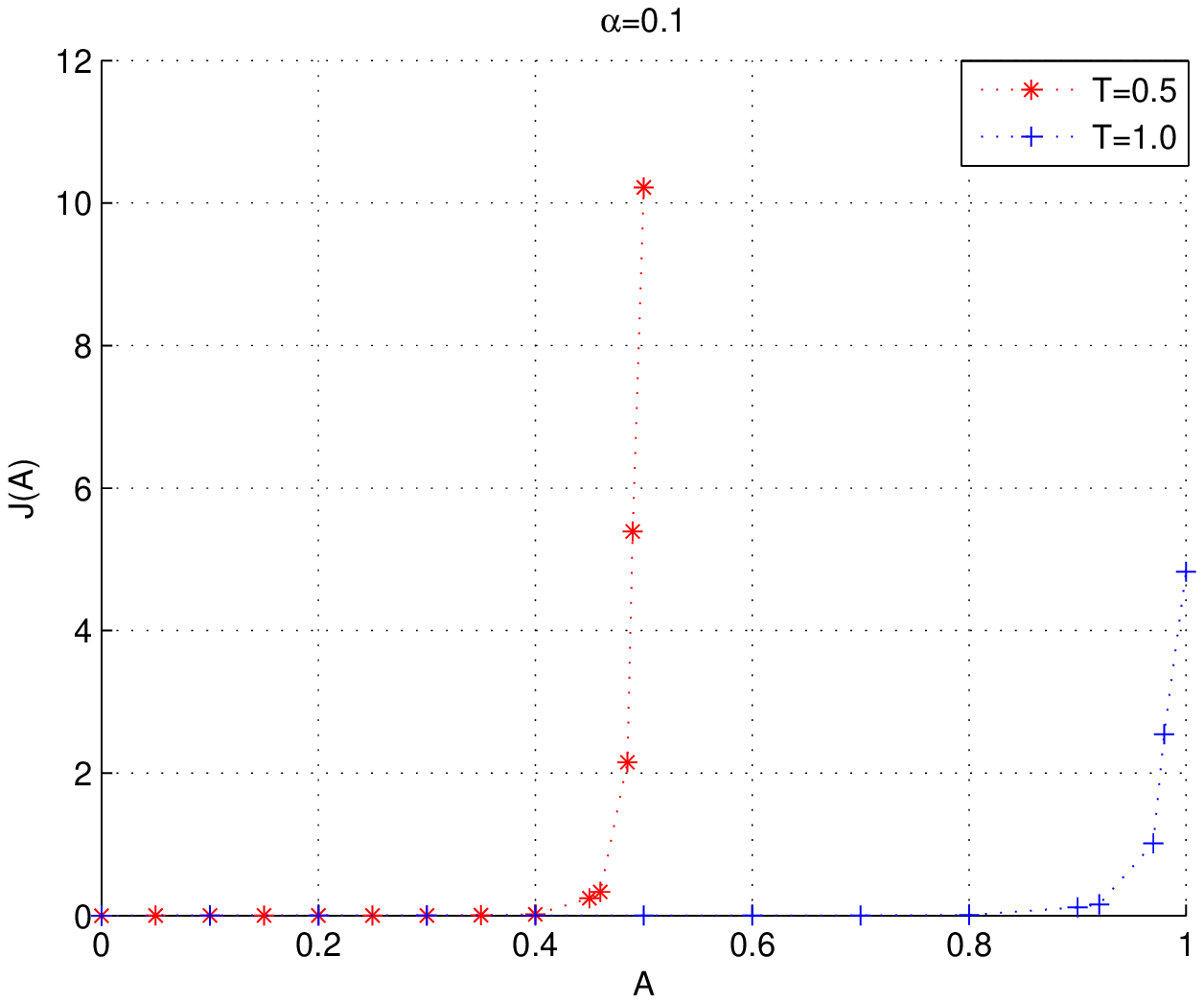}
    \caption{$J(A)$ for (\ref{2.12}) with $U(x)$ defined in (\ref{4.4}).}  \label{FIG.2}
    \end{minipage}
    \begin{minipage}[t]{0.50\linewidth}
    \includegraphics[scale=0.45]{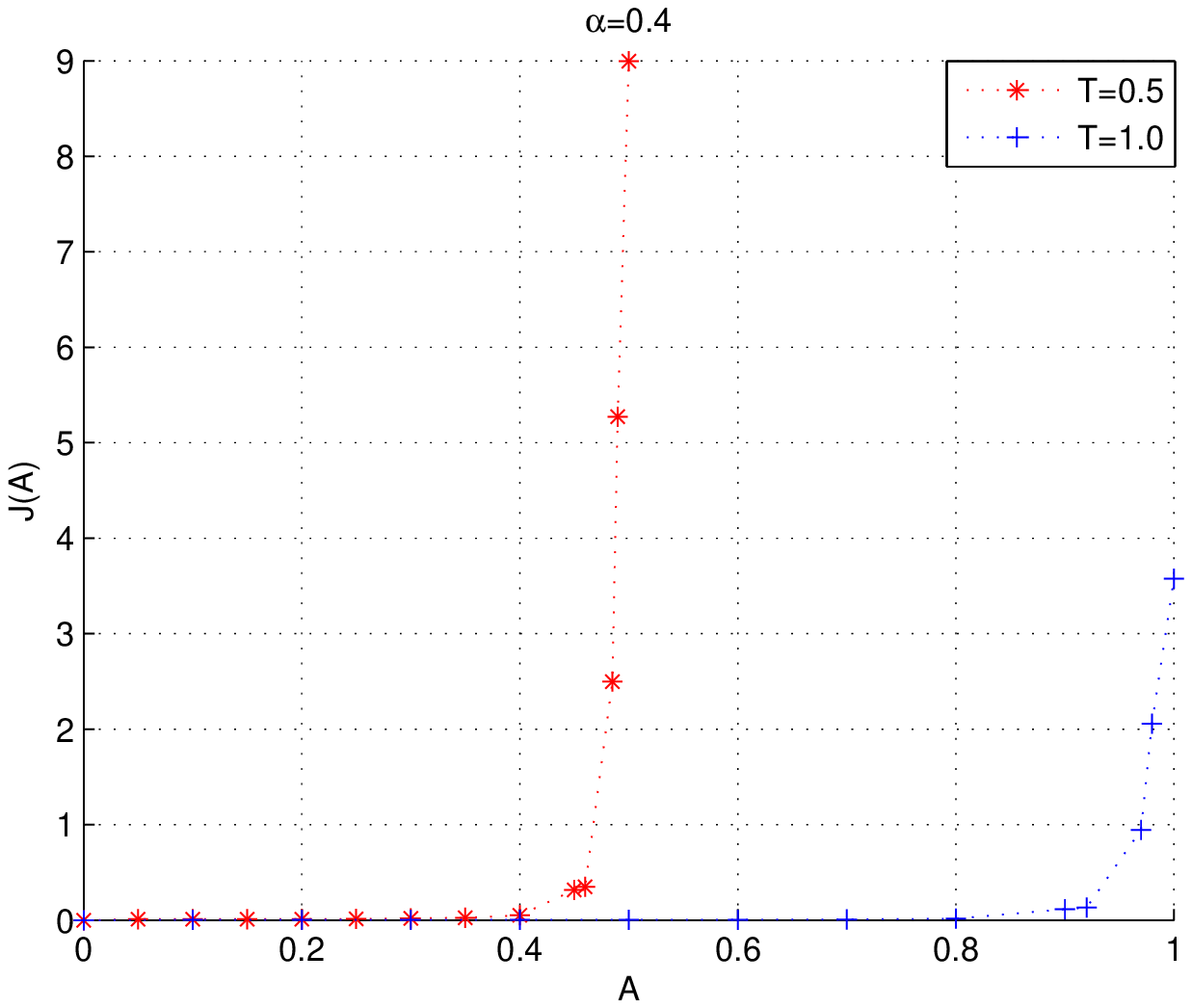}
    \caption{$J(A)$ for (\ref{2.12}) with $U(x)$ defined in (\ref{4.4}).}  \label{FIG.3}
    \end{minipage}
    \begin{minipage}[t]{0.50\linewidth}
    \includegraphics[scale=0.45]{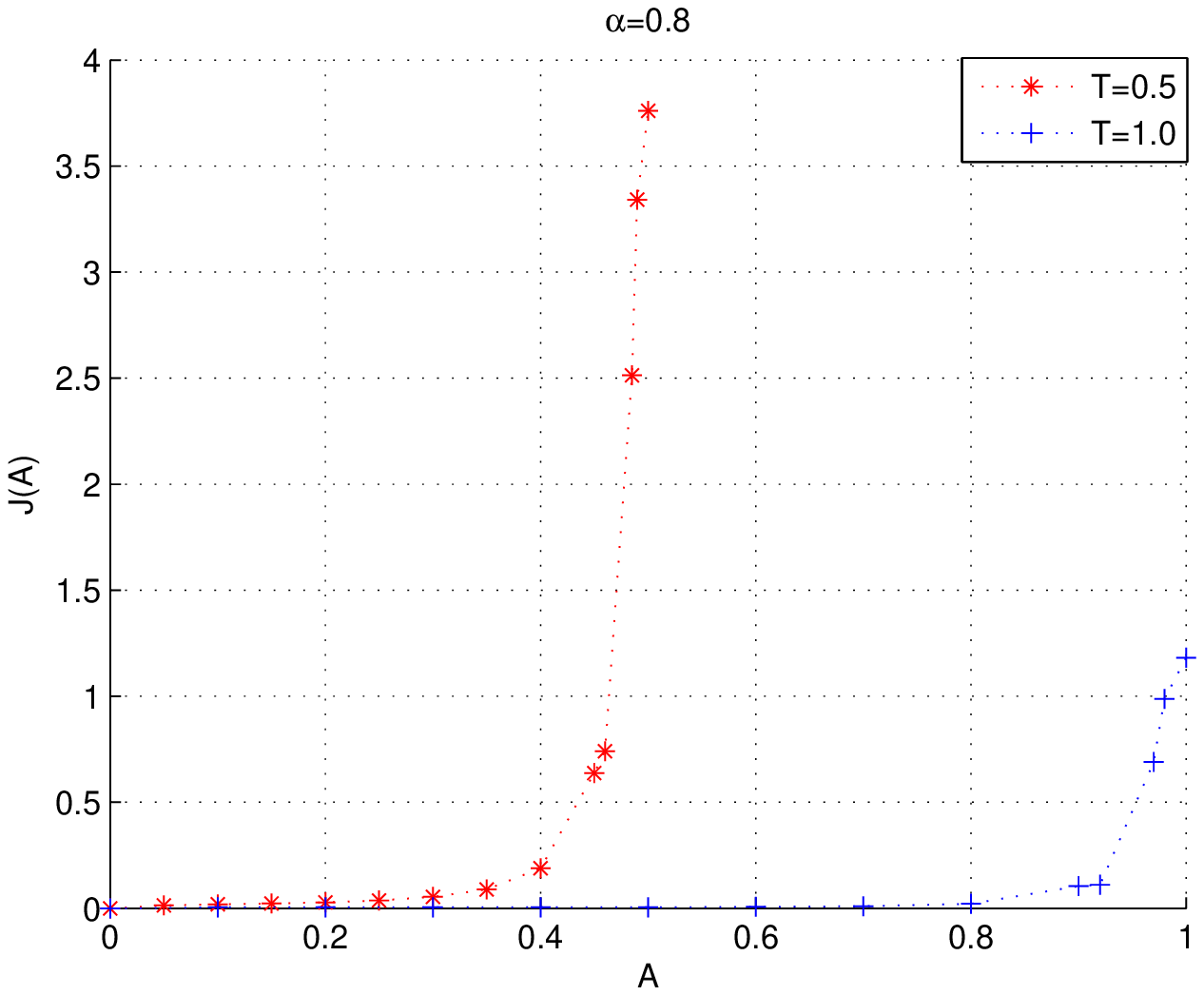}
    \caption{$J(A)$ for (\ref{2.12}) with $U(x)$ defined in (\ref{4.4}).}  \label{FIG.4}
    \end{minipage}
    \begin{minipage}[t]{0.50\linewidth}
    \includegraphics[scale=0.45]{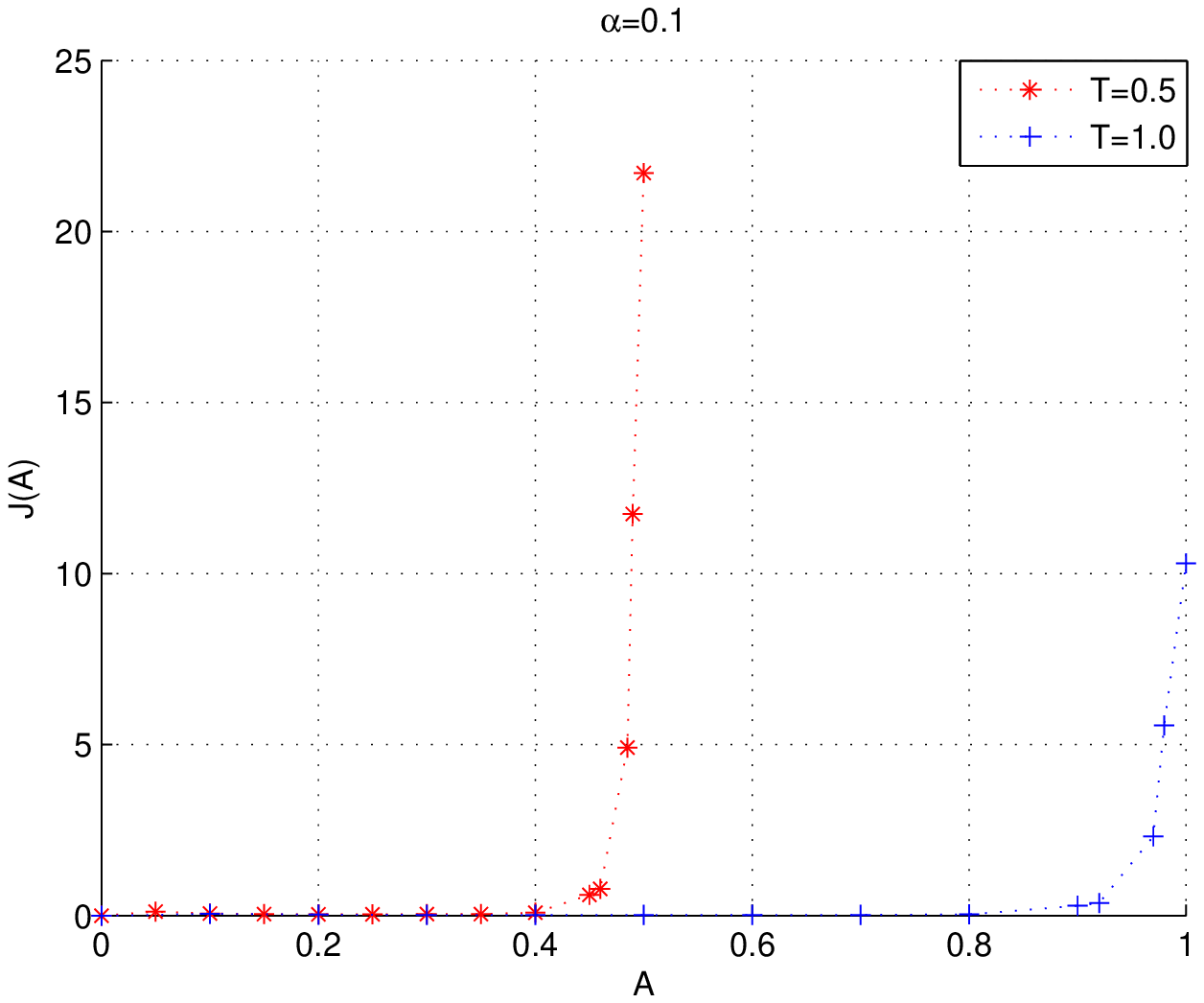}
    \caption{$J(A)$ for (\ref{2.14}) with $U(x)$ defined in (\ref{4.3}).}  \label{FIG.5}
    \end{minipage}
    \begin{minipage}[t]{0.50\linewidth}
    \includegraphics[scale=0.45]{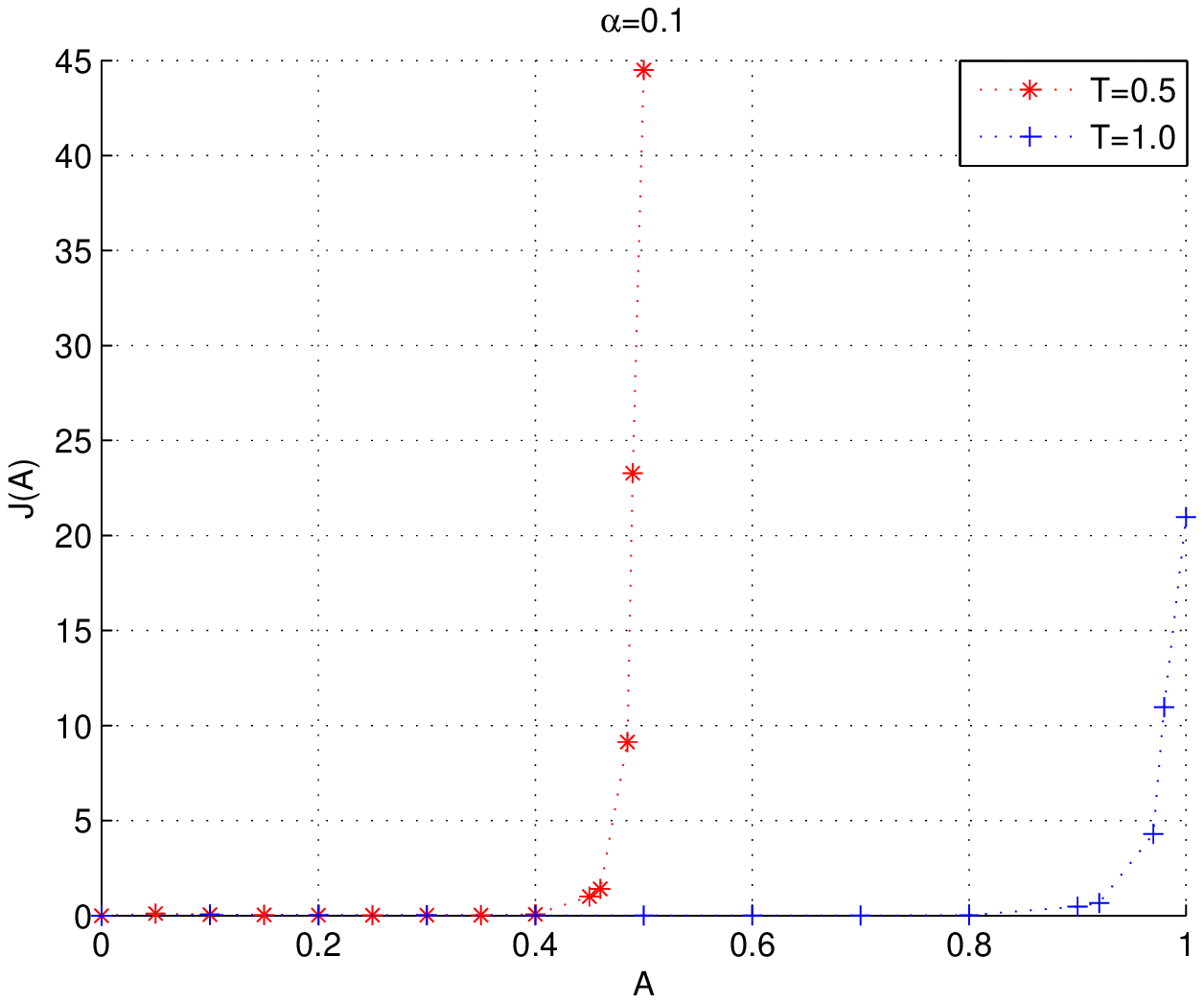}
    \caption{$J(A)$ for (\ref{2.14}) with $U(x)$ defined in (\ref{4.4}).}  \label{FIG.6}
    \end{minipage}
\end{figure}

Figures \ref{FIG.1}-\ref{FIG.6} show that the conservation of probability, i.e., the areas under the curves at time $T=0.5$ and $T=1.0$ are almost the same.

\vskip 0.2cm
The corresponding procedure of generating Figures \ref{FIG.7}-\ref{FIG.8} is executed as follows:

\begin{description}
\item[(1)] For every fixed $\alpha$, $A$ and $\rho_k$ (k=0,1,\ldots,35), according to   (\ref{2.12}) or (\ref{2.14}),
           we obtain $P(x,\rho_k,t)$ at time $T$ with  $q=2$, $\tau=h=1/500$ in (\ref{2.12}) or (\ref{2.14}).
\item[(2)] From Abate's method \cite{Abate:95} (see the Appendix), we get $P(x,A,t)$.
\item[(3)] Using the composite trapezoidal formula, we get $K(x):=P_{FFK}(x,t)=\int_0^T P(x,A,t)dA$ with $U(x)=1$ on the domain $x\in (0,1)$, where $A=(0:0.008:0.2)$ is used in calculation.
\item[(4)] When $U(x)=0$ on the domain $x\in (0,1)$, using (\ref{2.12}) or (\ref{2.14}), we obtain $KK(x):=P_{FFP}(x,t)$.
\end{description}

\begin{figure}[t]
    \begin{minipage}[t]{0.50\linewidth}
    \includegraphics[scale=0.45]{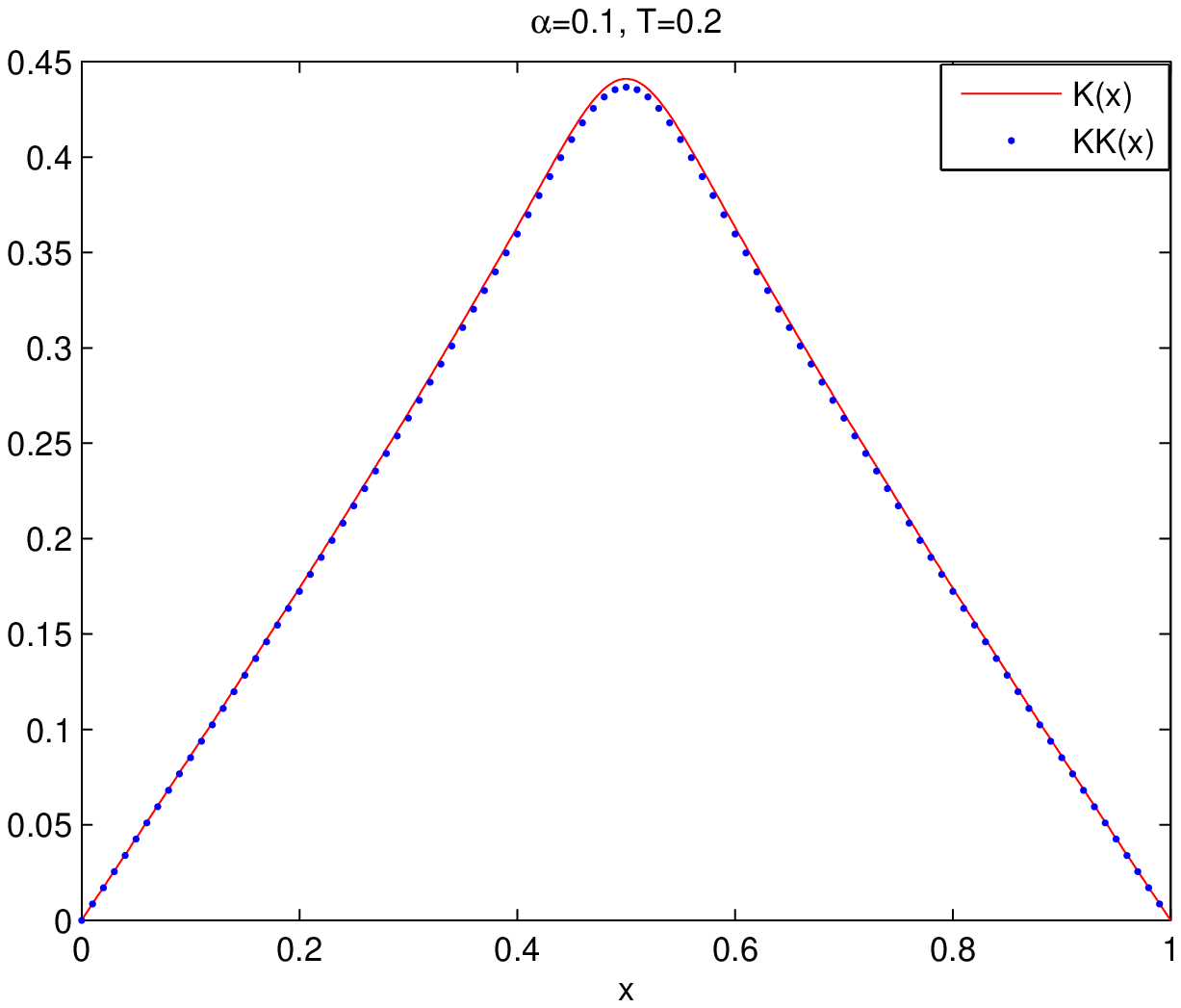}
    \caption{$K(x)$ and $KK(x)$ for (\ref{2.12}), respectively.}  \label{FIG.7}
    \end{minipage}
    \begin{minipage}[t]{0.50\linewidth}
    \includegraphics[scale=0.45]{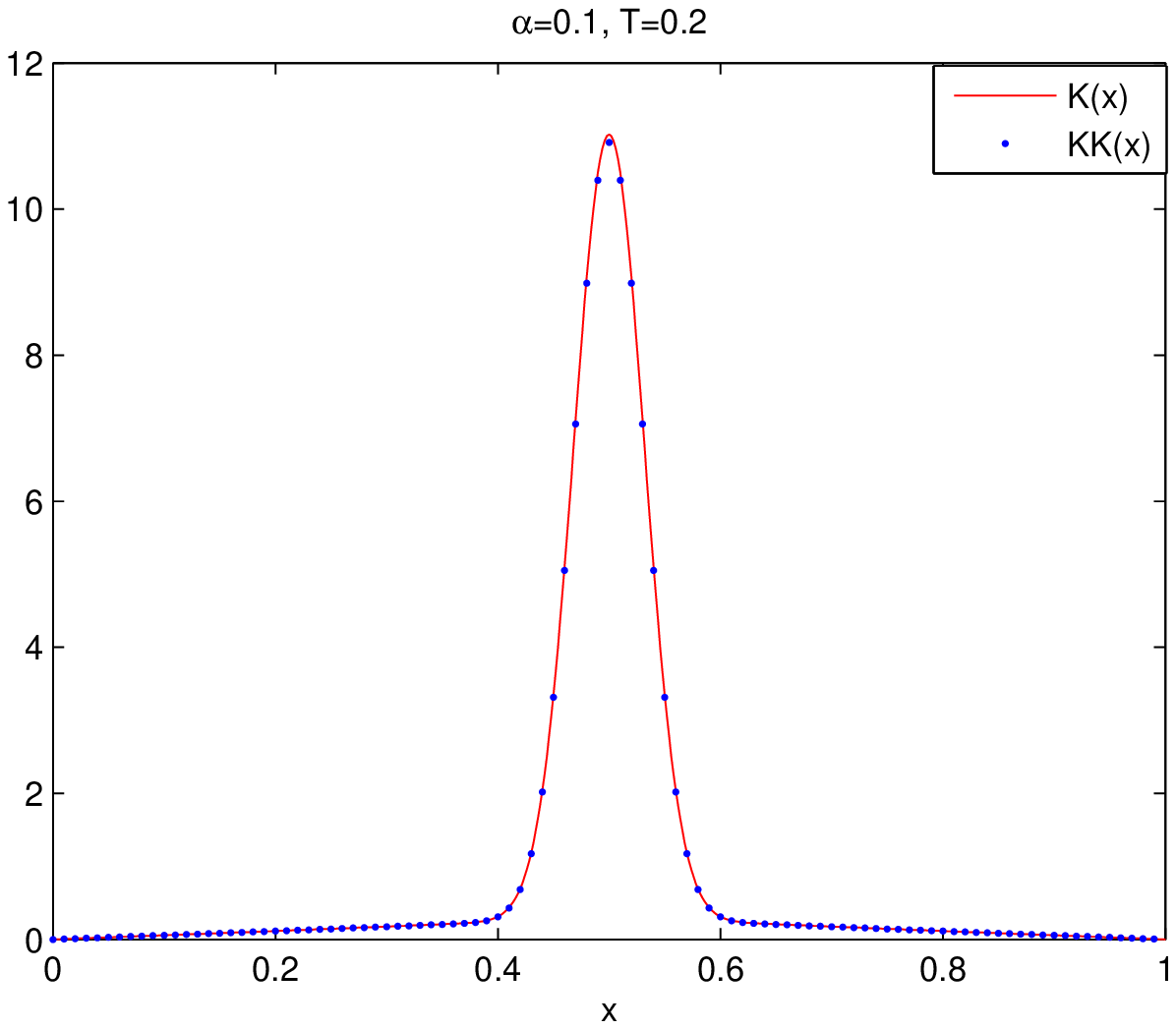}
    \caption{$K(x)$ and $KK(x)$ for (\ref{2.14}), respectively.}  \label{FIG.8}
    \end{minipage}
\end{figure}

Figures \ref{FIG.7}-\ref{FIG.8} show that $K(x)=KK(x)$, which further confirms the effectiveness of the provided schemes and, of course, the correctness of the algorithm of the numerical inversion of Laplace transformations.
\section{Conclusion}

The properties and numerical discretizations of fractional substantial derivative are detailedly analyzed in  \cite{Chen:13}.  This paper further discusses the numerical algorithms for the forward and backward fractional Feynman-Kac equations with fractional substantial derivative. Finite difference methods are used to solve both the forward and backward fractional Feynmann-Kac equations; and the finite element methods are applied to solve the backward fractional Feynmann-Kac equation. The finite difference scheme with  first order accuracy in time direction and the finite element methods for the backward Feynmann-Kac equation are theoretically analyzed, including the unconditional stability and the convergence; in particular,  the optimal convergent order is obtained for the finite element method. Extensive numerical experiments are performed for the schemes of both forward and backward fractional Feynmann-Kac equations. Especially, when $U(x)=0$, both the forward and backward fractional Feynman-Kac equations reduce to the celebrated fractional Fokker-Planck equation. By comparing the marginal PDF of the solutions of both forward and backward fractional Feynmann-Kac equations with the solution of fractional Fokker-Planck equation, the effectiveness of the proposed schemes are further verified.

\section*{Acknowledgments} This work was supported by the National Natural Science Foundation of China under
Grant No. 11271173.

\section*{Appendix}

To prove that ({\ref{1.4}}) is equivalent to ({\ref{1.5}}), we first introduce some properties of the fractional substantial calculus.

For $0<\alpha<1$, using  Lemma 2.3, Lemma 2.7 and Lemma 2.6 of \cite{Chen:13}, there exists
\begin{gather*}
{^s\!}D_t^\alpha[{^s\!}I_t^\alpha P(t)] =P(t); \tag{$A.1$}\\
{^s\!}I_t^\alpha[{^s\!}D_t^\alpha P(t)] =P(t)-[{^s\!}D_t^{\alpha-1}P(t)]_{t=0}\frac{ t^{\alpha-1} e^{-\rho U(x) t}  }{\Gamma(\alpha)};\tag{$A.2$}\\
{^s_c}{D}_t^\alpha P(x,t)= {^s\!}D_t^{\alpha-1}[{^s\!}D_t P(t)]
={^s\!}D_t^{\alpha}P(t)-\frac{ t^{-\alpha} e^{-\rho U(x) t}  }{\Gamma(1-\alpha)}P(0)\tag{$A.3$}.
\end{gather*}


\noindent{\bf Theorem A.1} \emph{  Let  $0<\alpha<1$ and $P(x,t) \in C_{x,t}^{2,1}[a,b]\times[0,T]$. Then
  \begin{equation}
  {^s\!}D_tP(x,t)= {^s\!}D_t^{1-\alpha}\left[\kappa_\alpha \frac{\partial^2}{\partial x^2} P(x,t) \right],\tag{$A.4$}
\end{equation}
is equivalent to
\begin{equation*}
{^s_c}{D}_t^\alpha P(x,t)={^s\!}D_t^\alpha P(x,t)-\frac{t^{-\alpha}e^{-\rho U(x) t}}{\Gamma(1-\alpha)}    P(x,0)
  = \kappa_\alpha \frac{\partial^2}{\partial x^2} P(x,t). \tag{$A.5$}
\end{equation*} }
\begin{proof}
Derive $(A.5)$ from $(A.4)$. Performing both sides of $(A.4)$ by ${^s\!}D_t^{\alpha-1}$ leads to
  \begin{equation*}
{^s\!}D_t^{\alpha-1} [{^s\!}D_tP(x,t)]
    ={^s\!}D_t^{\alpha-1}\left\{ {^s\!}D_t^{1-\alpha}\left[\kappa_\alpha \frac{\partial^2}{\partial x^2} P(x,t) \right]\right\}.
\end{equation*}
According to above equation and  $(A.3)$, $(A.2)$, we get
\begin{equation*}
{^s\!}D_t^{\alpha}P(x,t)-\frac{ t^{-\alpha} e^{-\rho U(x) t}P(x,0)  }{\Gamma(1-\alpha)}
=\kappa_\alpha \frac{\partial^2}{\partial x^2} P(x,t)
  -{^s\!}D_t^{-\alpha}\left[\kappa_\alpha \frac{\partial^2}{\partial x^2} P(x,t)\right]_{t=0}\frac{ t^{-\alpha} e^{-\rho U(x) t}  }{\Gamma(1-\alpha)}.
\end{equation*}
If a function $P(t)$ is continuously differentiable in the closed interval $[0,t]$, then
\begin{equation*}
\begin{split}
&{^s\!}D_t^{-\alpha}P(t)|_{t=0}\\
&\quad =\lim_{ t\rightarrow 0^{+}}\frac{1}{\Gamma(\alpha)}\int_{0}^t{\left(t-\tau\right)^{\alpha-1}}e^{-\rho U(x)(t-\tau)}{P(\tau)}d\tau\\
&\quad=\lim_{ t\rightarrow 0^{+}}\left[ \frac{t^\alpha e^{-\rho U(x) t}}{\Gamma(\alpha+1)} P(0)
+\frac{1}{\Gamma(\alpha+1)}\int_{0}^t{\left(t-\tau\right)^{\alpha}}e^{-\rho U(x)(t-\tau)}{[\rho U(x) P(\tau)+P'(\tau)]}d\tau \right]\\
&\quad=0.
\end{split}
\end{equation*}
Since  $P(x,t) \in C_{x,t}^{2,1}[a,b]\times[0,T]$, we have
$${^s\!}D_t^{-\alpha}\left[\kappa_\alpha \frac{\partial^2}{\partial x^2} P(x,t)\right]_{t=0}=0.$$
It implies that  $(A.5)$ holds.

Derive $(A.4)$ from $(A.5)$. Performing both sides of $(A.5)$ by ${^s\!}D_t^{1-\alpha}$ results in
\begin{equation*}
{^s\!}D_t^{1-\alpha}\left[{^s\!}D_t^\alpha P(x,t)-\frac{t^{-\alpha}e^{-\rho U(x) t}}{\Gamma(1-\alpha)}    P(x,0)\right]
  ={^s\!}D_t^{1-\alpha}\left[ \kappa_\alpha \frac{\partial^2}{\partial x^2} P(x,t)\right].
\end{equation*}
Using $(A.3)$ and $(A.1)$, there exists
 \begin{equation*}
{^s\!}D_t^{1-\alpha}\left[{^s\!}D_t^\alpha P(x,t)-\frac{t^{-\alpha}e^{-\rho U(x) t}}{\Gamma(1-\alpha)}    P(x,0)\right]
  ={^s\!}D_t^{1-\alpha}\left[ {^s\!}D_t^{\alpha-1}[{^s\!}D_t P(x,t)]\right]
  = {^s\!}D_t P(x,t).
\end{equation*}
That is  $(A.4)$ holds.

\end{proof}

\begin{algorithm*}
\% For the convenience to the reader, we add Matlab codes for the inverse Laplace transforms 
\%  used in this paper; for the details of the derivation of the algorithm, one can refer to \cite{Abate:95}.
\caption{The MATLAB Program for Numerical Inversion of Laplace Transforms }
\begin{algorithmic}[1]
\STATE function [Uappr,error]=Laplace Feynman Euler $(A,x)$
\STATE $m=20$;  $Ntr=15$; $U=\exp(18.4/2)/A$; $X=18.4/(2*A)$; $H=\pi/A$;
\STATE C(1)=1;
\STATE for i=2:m+1
\STATE ~~~~C(i)=C(i-1)*(m+1-i)/(i-1);
\STATE end
\STATE \% Assume that $P(x,\rho_k,t)$ is obtained with $k=0,1,\ldots, m+Ntr$ for fixed $x$, $t$ and $A$.
\STATE Sum= $P(x,\rho_0,t)/2$;
\STATE fnRf(k)=real$\left(P(x,\rho_k,t)\right)$;
\STATE for N=1:Ntr
\STATE ~~~~$Y=N*H$;
\STATE ~~~~Sum=Sum+$(-1)^N*$fnRf(N);
\STATE end
\STATE SU(1)=Sum;
\STATE for K=1:m
\STATE ~~~~N=Ntr+K;
\STATE ~~~~Y=N*H;
\STATE ~~~~SU(K+1)=SU(K)+$(-1)^N*$fnRf(N);
\STATE end
\STATE Avgsu=0; Avgsu1=0;
\STATE for J=1:m
\STATE ~~~~Avgsu=Avgsu+C(J)*SU(J);
\STATE ~~~~Avgsu1=Avgsu1+C(J)*SU(J+1);
\STATE end
\STATE Uappr=U*Avgsu$/2^{(m-1)}$; Uappr1=U*Avgsu1/$2^{(m-1)}$;
\STATE error=abs(Uappr-Uappr1)/2.
\end{algorithmic}
\end{algorithm*}

\newpage

\end{document}